\newtheorem{theorem}{Theorem}
\newtheorem{lemma}{Lemma}
\newtheorem{corollary}{Corollary}
\newenvironment{proof}{{\bf Proof:}}{\hfill\rule{2mm}{2mm}}
\newcommand{\Prob}[1]{\mathrm{P}\left\{#1\right\}}
\begin{document}
\title{Closed-Form Analysis of Non-Linear Age-of-Information in Status Updates with an Energy Harvesting Transmitter}

\author{Xi~Zheng, Sheng~Zhou,~\IEEEmembership{Member,~IEEE,} Zhiyuan~Jiang,~\IEEEmembership{Member,~IEEE,} Zhisheng~Niu,~\IEEEmembership{Fellow,~IEEE}
\thanks{This work is sponsored in part by the Nature Science Foundation of China (No. 61571265, No. 91638204, No. 61701275, No. 61861136003, No. 61621091), and Hitachi Ltd. (\emph{Corresponding author: Sheng Zhou.})

The authors are with Beijing National Research Center for Information Science and Technology, Department of Electronic Engineering, Tsinghua University, Beijing 100084, China. Emails: zhengx14@mails.tsinghua.edu.cn,
\{zhiyuan, sheng.zhou, niuzhs\}@tsinghua.edu.cn.}}

\maketitle

\begin{abstract}
Timely status updates are crucial to enabling applications in massive Internet of Things (IoT). This paper measures the data-freshness performance of a status update system with an energy harvesting transmitter, considering the randomness in information generation, transmission and energy harvesting. The performance is evaluated by a non-linear function of age of information (AoI) that is defined as the time elapsed since the generation of the most up-to-date status information at the receiver. The system is formulated as two queues with status packet generation and energy arrivals both assumed to be Poisson processes. With negligible service time, both First-Come-First-Served (FCFS) and Last-Come-First-Served (LCFS) disciplines for arbitrary buffer and battery capacities are considered, and a method for calculating the average penalty with non-linear penalty functions is proposed. The average AoI, the average penalty under exponential penalty function, and AoI's threshold violation probability are obtained in closed form. When the service time is assumed to follow exponential distribution, matrix geometric method is used to obtain the average peak AoI. The results illustrate that under the FCFS discipline, the status update frequency needs to be carefully chosen according to the service rate and energy arrival rate in order to minimize the average penalty.
\end{abstract}

\begin{IEEEkeywords}
	Age of information, energy harvesting, status update, queuing theory, internet of things.
\end{IEEEkeywords}

\section{Introduction}
Satisfying strict real-time requirements in wireless communication systems is of extensive concerns. A typical real-time application in Internet of Things (IoT) is remote monitoring and control, which requires timely status update to the fusion center, i.e., status information of the objects should be refreshed at the fusion center in a timely manner. However, due to the inevitable delays in queuing and transmission, the received status packets do not carry the present status information.
To characterize the lag in status update, \textit{age of information} (AoI) has been proposed in \cite{6195689} as a metric for information freshness. It is defined as the time that has elapsed since the generation of the most up-to-date status information at the receiver, as is illustrated in Fig. \ref{fig:age}. The AoI at time $t$ is expressed as $$\Delta(t) = t - U(t), $$ where $U(t)$ represents the time stamp at the generation epoch of the {\em most} up-to-date status information that has been received before time $t$. In \cite{6195689}, the status update process is formulated as a queuing system, in which the traffic arrivals correspond to the generations of status packets, and the service times correspond to transmissions and the time waiting for medium access. The time-averaged AoI of M/M/1, M/D/1 and D/M/1 queues under the first-come-first-served (FCFS) discipline is obtained, and the comparison among the three queues indicates that a more regular update brings a smaller average AoI.

\begin{figure}[h]
\centering
\vspace*{-0.2in}
\includegraphics[width=2.8in]{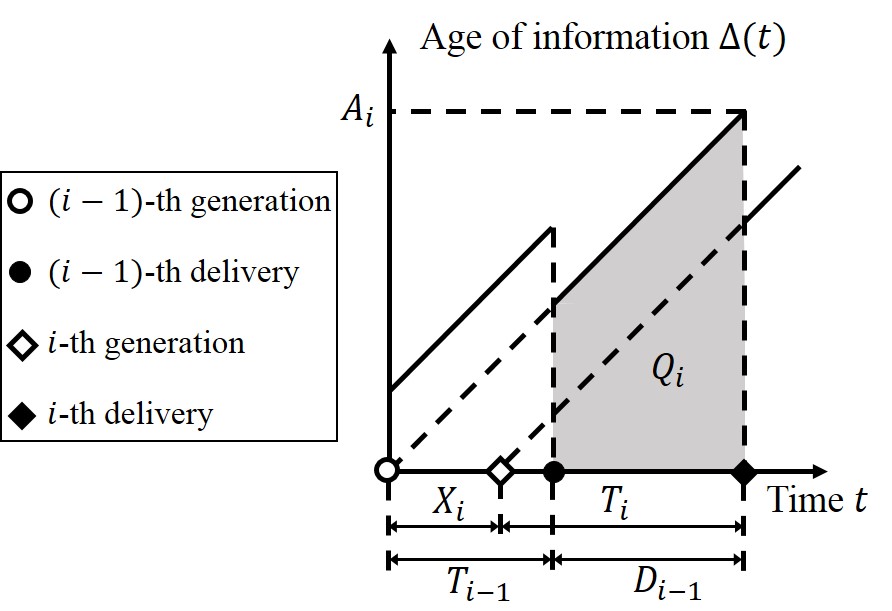}
\vspace*{-0.2in}
\caption{An example that illustrates the evolution of AoI. }
\label{fig:age}
\vspace*{-0.1in}
\end{figure}

Meanwhile, numerous nodes in IoT will be powered by renewable sources using energy harvesting techniques \cite{5522465}. For example, to sustainably update the environmental conditions (temperature, air humidity and etc.), motion (position and velocity) or other characters to a fusion center, energy harvesting transmitters can free the monitored object from limited power backup or cables. Wireless nodes powered by renewable energy are faced with the randomness in energy supplies, which makes guaranteeing real-time communications more challenging compared with grid-powered or solely battery-powered systems, and further complicates the problem of keeping status update in a regular fashion. Therefore, the performance analysis of a status update system with an energy harvesting transmitter is essential in addition to the analysis of a system powered by a stable energy source.

The average AoI is considered a key indicator to the performance of a status update system. However, it cannot straightly describe the performance degradation caused by the lag in status update. A thorough survey on the performance degradation caused by information staleness can be found in \cite{NonlinearFunc}. The performance degradation of a system, such as the inaccuracy in monitoring and the invalidity of control decisions, varies based on the scenario and is generally non-linear on AoI. Ref. \cite{DBLP:journals/corr/abs-1805-03271} shows that the characteristics of the average AoI and the violation probability of {\em peak AoI}, which is defined as the AoI right before the reception of a new status information, can be noticeably different. Therefore, the actual performance of a status update system, which is non-linear on AoI, should be further investigated. The non-linear AoI penalty function is introduced in \cite{8000687} to evaluate information staleness. Ref. \cite{8006543} extends the work on M/M/1 in \cite{6195689}, and obtains the average penalty under exponential and logarithmic AoI-penalty functions. The case where AoI is expected to be within a certain threshold is considered in \cite{8424039}, and scheduling algorithms to allocate channel access among multiple users are proposed in order to reduce violation probability.
Other work mainly focuses on the analysis on average AoI or average peak AoI\cite{7364263}--\!\!\cite{DBLP:journals/corr/abs-1801-03975}. Two special cases of multiple servers, M/M/$\infty$ and M/M/2 systems, are investigated in \cite{7364263}. Time-averaged peak AoI is considered for multi-class M/G/1 and M/G/1/1 queuing systems in \cite{7282742} and M/M/1 queues with delivery errors in \cite{7541765}. Ref. \cite{DBLP:journals/corr/abs-1805-12586} studies the average AoI of a G/G/1/1 status update system. Ref. \cite{7541763} studies the packet management policies for a status update system and shows that preemptive LCFS discipline is age-optimal, throughput-optimal and delay-optimal, given that the service times are independent and identically distributed (i.i.d.) exponential random variables. Ref. \cite{7415972} compares the performance of M/M/1, M/M/1/1 and M/M/1/2*, in which the backlogged packet is replaced if a new packet arrives. The stationary distribution of AoI in GI/GI/1 system is derived in \cite{SamplePath}. Recently, Ref. \cite{DBLP:journals/corr/YatesK16} introduces a method named stochastic hybrid systems (SHS) to the analysis of status updating systems. For M/M/1 queues with multiple sources, the performance under FCFS, preemptive and non-preemptive last-come-first-served (LCFS) policy is investigated with SHS method.
Ref.\cite{DBLP:journals/corr/abs-1801-01803}--\!\!\cite{DBLP:journals/corr/abs-1801-03975} investigate user-scheduling policies to minimize overall average AoI in multi-user scenarios.

%The randomness in energy supplement in energy harvesting systems further adds to the difficulty of timely status update.
Scheduling schemes to reduce latency in an energy harvesting communication system can be found in \cite{7010878}--\!\!\cite{7492927}. Efforts on the AoI in energy harvesting powered status update systems began to emerge recently. Existing work mainly focuses on the scheduling of status packet transmissions subject to energy constraints \cite{7308962}--\!\!\cite{Arafa18}, where the generation of status packets can be fully controlled. Ref. \cite{7308962} considers scheduling policies for the minimization of average AoI under energy replenishment constraints in the systems where the queuing delay and service time are neglected. Ref. \cite{7283009} proposes and compares three intuitive status update policies, which try to equalize update interval, to equalize update delay, and to reduce packet queuing, respectively. Ref. \cite{Yang} analyzes energy harvesting systems with different battery capacities, and discusses their AoI-optimal transmission scheduling policies by which the average AoI is minimized. A two-hop status update system with energy harvesting transmitter and relay is investigated in \cite{DBLP:journals/corr/ArafaU17}. Ref. \cite{8006703} analyzes the battery-threshold policy in energy harvesting system, and finds the condition for one to minimize the average AoI among all threshold policies. The optimality of energy dependent AoI-threshold policies in a system with finite battery and zero service time is proved in \cite{Arafa18}. Ref. \cite{8406846} and \cite{8437904} explore M/M/1/1 energy harvesting status update systems, where there are at most one status packet buffered in the system. They investigate the average AoI in the system with SHS method, and analyze the asymptotic cases where status packet arrival rate, energy arrival rate, or service rate is relatively large, respectively.

In this paper, we jointly consider the randomness in status packet generation and energy harvesting in the analysis of non-linear AoI-based performance for an energy harvesting wireless communication system. The randomness in transmission and MAC delay is also considered to investigate the overall impact of status packet generation, energy arrivals and services on the AoI of a status update system. The status update problem is characterized as a queuing system with a finite data queue (buffer), where the status packets are stored after their generations, and a finite energy queue (battery). Both the status packet arrivals and energy packet arrivals are assumed to be Poisson processes. The problem considered is similar to the one in \cite{8406846} and \cite{8437904}, and they focus solely on average AoI in a system with unit buffer size, and investigate the average AoI by SHS. Different from \cite{8406846} and \cite{8437904}, this paper applies the conventional stochastic analysis with queuing model, and analyzes the non-linear AoI-based performance arbitrary buffer and battery capacity. The contribution of this paper is summarized as follows:
\begin{enumerate}
\item A method to obtain the closed-form AoI-based penalty is established for both FCFS and LCFS disciplines when the service time is negligible, since the service time is usually much smaller than packet generation intervals and energy arrival intervals in real-world applications. Explicitly, the closed-form expressions of the cumulative probability distributions (CDF) of the peak AoI and the sojourn time, and the rate of valid updates\footnote{A valid update is defined as a status packet that is the most up-to-date status packet upon reception in Sec. \ref{sec:model}. } are obtained. The average non-linear penalty of the system can be derived by this method for integrable AoI-penalty functions.
\item Average penalties under three typical AoI-penalty functions are obtained and analyzed. They corresponds to the average AoI, average exponential penalty of AoI and AoI's threshold violation probability in the system. The results are further compared under different buffer capacities, battery capacities and service disciplines. Results shows that the exponential penalty is extremely sensitive to the ratio between status generation frequency and energy arrival rate, especially under the FCFS discipline when the buffer capacity is large.
\item To consider a more general case and incorporate non-negligible service time, the service time is assumed to be independent and identically distributed (i.i.d.) random variables following exponential distribution, and the status update is formulated as a quasi-birth-and-death (QBD) process with finite battery capacity. The explicit expression for the average peak AoI is obtained after the stationary distribution of system states is computed by the matrix geometric method \cite{neuts1994matrix}.
\end{enumerate}
%The server is assumed to work under the FCFS or the LCFS service disciplines when the energy queue is not empty. The system is formulated as a quasi-birth-and-death (QBD) process, and a method to compute the average peak AoI is proposed. For the case where the service time is small enough, the probability distribution of both the peak AoI and sojourn time of status packets for FCFS and LCFS disciplines are obtained, respectively. The closed-form expressions of linear and non-linear average penaltys, including average AoI, average penalty under exponential penalty function, and AoI violation probability in the system are derived. The results show that, as battery capacity increases, the difference of penalty between the energy-harvesting case and the average-power-constrained case exponentially decays. It is also shown that in the FCFS case, as the status packet arrival rate grows, the system first benefits from more frequent status updates, then deteriorates due to the long queuing delay, while LCFS case with unit data buffer always prefers higher status packet arrival rate. It is also noticed that blocking more status packets in FCFS system does not always reduce the average AoI.

The remainder of the paper is organized as follows. Section \ref{sec:model} explains the basic notations and the problem formulation.  Results in the negligible-service-time regime are described in Section \ref{sec:a}. Section \ref{sec:non-zero} formulates the problem as a QBD process and describes how to compute the average peak AoI. Section \ref{sec:anal} illustrates the results with figures. Section \ref{sec:con} concludes the paper.

\section{System Model}
\label{sec:model}

As depicted in  Fig. \ref{fig:queue}, a status update system powered by renewable energy is modeled as a queuing system with a data buffer and a battery. The arrival of a status packet corresponds to the generation of a status information from the source, while each departure represents the successful reception of the status information at the receiver side.

\begin{figure}[t]
%\vspace*{4pt}
\centering
\includegraphics[width=2.5in]{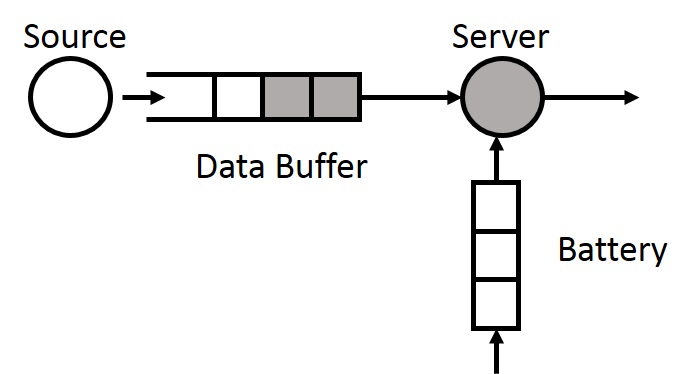}
\vspace*{-0.1in}
\caption{Queuing model for the analysis of a status update system with an energy harvesting server.}
\label{fig:queue}
\vspace*{-0.3in}
\end{figure}

\subsection{Status Update Model}
To capture the randomness in the generation of status packets, the arrival of status packets is assumed to be a Poisson process with rate $\lambda$. After its generation, a status packet first arrives at the data buffer, which can store at most $K$ status packets. Thus, if there are already $K$ status packets in the data buffer upon the arrival of a new status packet, one status packet must be dropped.

%The service time of a status packet is assumed to be an independently and identically distributed (i.i.d) exponential random variable with mean $\mu^{-1}$.
The server is assumed to be work-conserving, \textit{i.e.}, the server is idle only when at least one of the two queues is empty. Two kinds of service disciplines are considered. The first one is FCFS, with which a newly-arrived status packet waits at the end of the data queue, and will not be served until all the status packets ahead of it depart. Under the FCFS discipline, older status packets have higher priority, and the newly-arrived status packet will be blocked out of the data buffer if there are already $K$ status packets in the data buffer. The other service discipline is LCFS, which allows the latest status packet to wait at the front of the queue and the oldest packet in the buffer to be discarded if the buffer is full.

It is possible that a newly-delivered status packet is older than the status packets at the receiver side. However, the reception of old information does not change the AoI. To clarify this, a \emph{valid update} is defined as a status packet that is delivered and is the most up-to-date packet upon its reception. Thus, the AoI is reset to the age of the received packet only when it is a valid update. In \cite{SamplePath}, the authors define informative packets and non-informative packets to distinguish the status packets that reset the AoI from those that do not. The informative packets in \cite{SamplePath} corresponds to the valid updates. Under FCFS discipline, any status packet that enters the data buffer is a valid update, while under LCFS discipline, a status packet is a valid update if and only if there is no status packet arrival between its arrival and departure.

\subsection{Energy Model}
The energy model in this paper is similar to those in \cite{7283009}--\!\!\cite{8437904}. The transmitter is powered by an energy harvesting module, which consistently harvests energy and stores it in a battery with limited capacity. Assume that the service of a status packet requires $E_0$ Joules of energy, which is also referred to as an energy packet. Discretizing the battery by $E_0$ Joules of energy, the capacity of the battery is denoted as $B$ energy packets. Therefore, the arrival of energy packets represents the accumulation of integral multiples of $E_0$ Joules in the battery. The arrival of energy packets is modeled as a Poisson process with rate $r$, which characterizes the randomness and unpredictability in energy harvesting. The energy arrival rate $r$ is assumed to be greater than the status packet arrival rate $\lambda$ to ensure the stability of the data queue. A new energy packet is immediately discarded if the battery is full.

\subsection{Objective Functions}
\subsubsection{Linear functions}
\label{sss:linear}
The performance of a status update system is commonly evaluated by the long-term averaged AoI or the long-term averaged peak AoI. Denote the inter-arrival time between the $(i-1)$-th and the $i$-th valid update by $X_i$, the inter-departure time between the $i$-th and the $(i+1)$-th valid update by $D_i$ and the $i$-th valid update's sojourn time by $T_i$, as is depicted in Fig. \ref{fig:age}.
The $i$-th \textit{peak AoI} $A_i$, which is the AoI right before the reception of the $i$-th valid update \cite{7415972}, can be written as
\begin{equation}
\label{Eq:pAoI}
A_i = X_i + T_i = D_{i-1}+T_{i-1}.
\end{equation}
Eq. (\ref{Eq:pAoI}) can be proved straightforward by the evolution curve of AoI in Fig. \ref{fig:age}.
Taking expectation, letting $i\to\infty$ at both sides of Eq. (\ref{Eq:pAoI}) and denoting the limits of the variables as the ones without subscripts, the average peak AoI follows
\begin{eqnarray}
\label{Eq:page}
\mathbb{E}\left[A\right] = \mathbb{E}\left[X\right] + \mathbb{E}\left[T\right] = \mathbb{E}\left[D\right] + \mathbb{E}\left[T\right].
\end{eqnarray}
According to \cite{6195689}, the average AoI of a stationary and ergodic system is defined as $$\bar{\Delta} = \lim_{\mathcal{T}\to\infty}\frac{1}{\mathcal{T}}\int_{0}^{\mathcal{T}}\Delta(t)\,\mathrm{d}t, $$in which the integral equals the area below the AoI curve. Denote the area below the AoI curve between the delivery of the $(i-1)$-th and the $i$-th valid update as $Q_i$, as is shadowed in Fig. \ref{fig:age}. The average AoI is expressed as
$$\bar{\Delta} = \lim_{\mathcal{T}\to\infty}\frac{1}{\mathcal{T}}\sum_{i = 1}^{N(\mathcal{T})} Q_i, $$
where $N(\mathcal{T})$ represents the number of valid updates delivered before $t=\mathcal{T}$. Defining the arrival rate of valid updates as $\tilde{\lambda} = \lim_{\mathcal{T}\to\infty}\frac{N(\mathcal{T})}{\mathcal{T}}$, the average AoI becomes $\bar{\Delta} = \tilde{\lambda}\lim_{\mathcal{T}\to\infty}\mathbb{E}\left[Q_i\right]. $
Since $Q_i = \left(A_i^2-T_{i-1}^2\right)/2$, the average AoI is given by
\begin{eqnarray}
\bar{\Delta} &=& \frac{\tilde{\lambda}}{2}\left(\mathbb{E}\left[A^2\right] - \mathbb{E}\left[T^2\right]\right).
\label{Eq:ageo}
\end{eqnarray}

\subsubsection{Non-linear functions}
Considering the non-linear performance degradation caused by outdated data, AoI is further generalized as an AoI-related penalty function, which quantizes the performance based on AoI. A general way to characterize the non-linear penalty is by defining a non-linear function $g(\Delta)$ that maps AoI to the penalty as introduced in \cite{8000687}, so that the average penalty of a stationary and ergodic system is
\begin{eqnarray}
\label{Eq:general}
C &=& \lim_{\mathcal{T}\to\infty}\frac{1}{\mathcal{T}}\int_{0}^{\mathcal{T}}g\left(\Delta\left(t\right)\right)\,\mathrm{d}t.
\end{eqnarray}
In \cite[Example 1]{SamplePath}, the authors provide how to obtain the average penalty with the CDF of AoI given, without further analysis on the average penalty. It is also proved that the CDF of AoI can be obtain with the CDFs of the peak AoI $A$ and the sojourn time $T$, and the arrival rate $\tilde{\lambda}$ of valid updates. Next, we are going to provide another method to show how to derive the average penalty directly with the CDFs of the peak AoI $A$ and the sojourn time $T$, and the arrival rate $\tilde{\lambda}$ of valid updates.

The right-hand side of Eq. (\ref{Eq:general}) is rearranged as the summation of integrals over the intervals between sequential deliveries of valid updates:
\begin{eqnarray*}
\lim_{\mathcal{T}\to\infty}\frac{1}{\mathcal{T}}\int_{0}^{\mathcal{T}}g\left(\Delta\left(t\right)\right)\,\mathrm{d}t &=& \lim_{\mathcal{T}\to\infty}\frac{N(\mathcal{T})}{\mathcal{T}}\lim_{\mathcal{T}\to\infty}\frac{1}{N(\mathcal{T})}\sum_{i = 0}^{N(\mathcal{T})-1}\int_{\sum_{j=0}^{i-1}D_j}^{\sum_{j=0}^{i}D_j}g\left(\Delta\left(t\right)\right)\,\mathrm{d}t.
\end{eqnarray*}
According to the definition, AoI is set to sojourn time $T_i$ upon the $i$-th valid update's delivery, and grows linearly with unit slope before the delivery of the $(i+1)$-th valid update. Therefore,
\begin{eqnarray*}
\lim_{\mathcal{T}\to\infty}\frac{1}{\mathcal{T}}\int_{0}^{\mathcal{T}}g\left(\Delta\left(t\right)\right)\,\mathrm{d}t&=& \tilde{\lambda}\lim_{\mathcal{T}\to\infty}\frac{1}{N(\mathcal{T})}\sum_{i = 0}^{N(\mathcal{T})-1}\int_{T_i}^{A_i}g\left(\Delta\right)\,\mathrm{d}\Delta \\
&=& \tilde{\lambda}\left(\mathbb{E}\left[G(A)\right] - \mathbb{E}\left[G(T)\right]\right),
\end{eqnarray*}
in which $G(x) = \int_0^xg(\Delta) \,\mathrm{d}\Delta$.

Thus, given the CDFs of the peak AoI $A$ and the sojourn time $T$, and the arrival rate $\tilde{\lambda}$ of valid updates, the average penalty of the system can be determined by
\begin{equation}
\label{Eq:int}
C = \tilde{\lambda}\int_0^{\infty}G(a)\,\mathrm{d}\mathrm{P}\left\{A\leq a\right\} - \tilde{\lambda}\int_0^{\infty}G(t)\,\mathrm{d}\mathrm{P}\left\{T\leq t\right\}.
\end{equation}

The average penalty indicates the long-term average performance of a status update system. Our objective is to investigate how system parameters, such as status packet arrival rate, buffer capacity, battery capacity and service disciplines, affect the average penalty.

\section{Negligible Service Time Regime}
\label{sec:a}
To gain more insights, in this section, we first explore the asymptotic results where the service time is negligible compared to the status packet arrival intervals and the energy packet arrival intervals. The scenario where the average service time is much shorter than status packet arrival intervals and energy arrival intervals commonly exists in practical IoT applications. The service time mainly incorporates transmission and the MAC (Medium Access Control) delay. Since status update packets are usually small in data size, the time for transmission (even if retransmission is considered) or MAC delay is relatively short. For example, a sub-frame in FDD-LTE is 1ms, so the service time of a short packet can be several milliseconds. On the contrary, the need for status update (e.g., temperature) and energy harvesting are mostly in a larger time scale than milliseconds. Therefore, the asymptotic regime where the service time is negligible is reasonable to consider and significant in offering insight on the performance of status update, especially on the impact of energy provision in such a system.

\begin{figure}[h]
%\vspace*{4pt}
\centering
\includegraphics[width=3.5in]{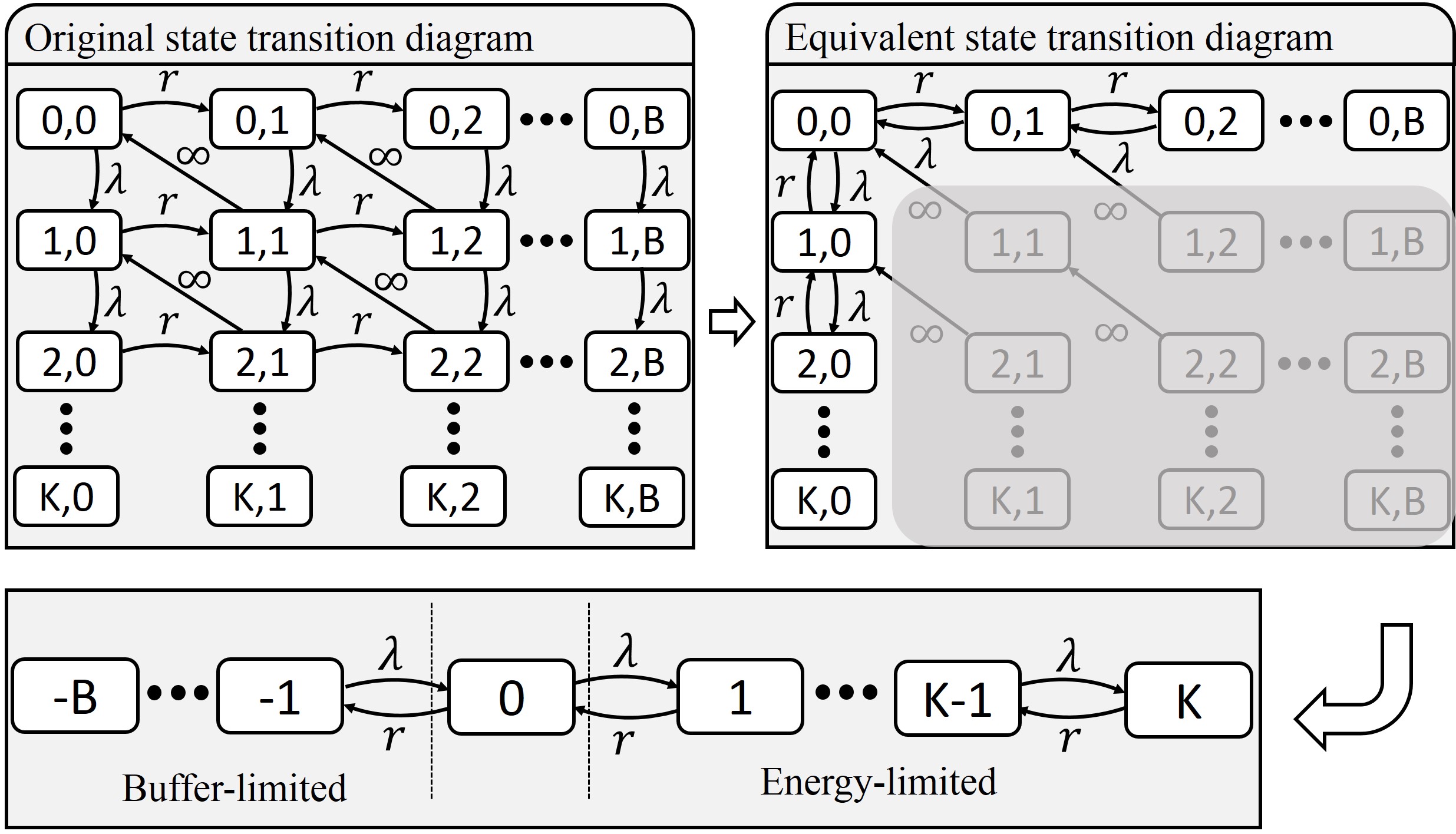}
\caption{The state transition diagram for the negligible-service-time problem. }
\label{fig:one-dim-transit}
\vspace*{-0.1in}
\end{figure}

Denote the data queue length by $q_1$ and the energy queue length by $q_2$. The state transition in the negligible-service-time regime is illustrated in the left diagram in Fig. \ref{fig:one-dim-transit}. If data queue length $q_1 < K$, each state $(q_1, q_2)$ transits to state $(q_1+1, q_2)$ with rate $\lambda$; If energy queue length $q_2 < B$, each state $(q_1, q_2)$ transits to state $(q_1, q_2+1)$ with rate $r$. Note that when the service time is negligible, the service for a status packet can be completed instantly when the energy queue is not empty. Therefore, every state $(q_1,q_2)$ with both $q_1>0$ and $q_2>0$ transits to state $(q_1-1,q_2-1)$ with rate $\infty$. In other words, any state $(q_1,q_2)$ with $q_1q_2\neq0$ is a transient state that will never be visited thereafter. In the right diagram of Fig. \ref{fig:one-dim-transit}, an equivalent state transition is drawn with all the transient states shadowed in gray. As shown in the diagram, the state transition among the recurrent states, \textit{i.e.}, $q_1q_2=0$, is identical to an M/M/1 queue. Based on this, the system states are indexed as $S = q_1 - q_2 \in \left\{-B, -\left(B-1\right), \cdots,-1,0,1,\cdots, K\right\}$, which distinctively maps a recurrent state to $S$:
\begin{enumerate}
\item If $S < 0$, data queue length $q_1 = 0$ and energy queue length $q_2 = -S$;
\item If $S = 0$, both queues are empty;
\item If $S > 0$, data queue length $q_1 = S$ and energy queue length $q_2 = 0$.
\end{enumerate}

For each state $S \in \left\{-(B-1),\cdots, -1,0,1\cdots, K\right\}$, the system transits to state $S-1$ when there is an energy arrival; for $S \in \left\{-B, \cdots,-1,0,1,\cdots, K-1\right\}$, system state becomes $S+1$ when there is a status packet arrival. Thus, as is shown in the bottom diagram of Fig. \ref{fig:one-dim-transit}, the system is equivalent to an M/M/1 queue with status packet arrival rate $\lambda$, service rate $r$ and buffer size $K+B$. Let $\theta$ denote the ratio of the status packet arrival rate to the energy packet arrival rate, \textit{i.e.}, $\theta = \frac{\lambda}{r}$, and the utilization of the M/M/1 queue is $\theta$. Next, we first derive the stationary probability distribution of the peak AoI $A$ and the sojourn time $T$, and the arrival rate $\tilde{\lambda}$ of valid updates, then obtain the average penalty under three types of penalty functions (depicted in Fig. \ref{fig:func}) by Eq. (\ref{Eq:int}). The three penalty functions are:
\begin{enumerate}
\item Linear function $g(\Delta) = \Delta$: The average penalty equals the long-term average AoI. This penalty function is suitable for the systems in which the influence of the delay in information grows approximately linearly with time.
\item Exponential function $g(\Delta) = \alpha^{-1}\left(e^{\alpha\Delta}-1\right), \alpha\neq0$: We let $g(0) = 0$ and $g'(0) = 1$, so that the penalty and the slope when $\Delta=0$ are identical to the ones in the linear function, for the simplicity of comparison. When exponent $\alpha>0$, the exponential penalty function is more sensitive to large AoI compared to the linear function, and is more reasonable in the scenarios where the time-correlation among status is relatively small. Taking the finite-state Markov channel\cite{350282}, which is a widely adopted channel model, as an example, the probability of correctly estimating the current channel state based on delayed channel state exponentially decays to the limiting distribution. Additionally, note that the exponential penalty function is increasing for both $\alpha>0$ and $\alpha<0$. Exponent $\alpha$ being negative can be useful in systems such as \cite{NegExpo}.
\item Shifted unit step function $g(\Delta) = \mathbf{1}\left\{\Delta-\beta\right\}$: Unit step function $\mathbf{1}\left\{\cdot\right\}$ is defined as
\begin{eqnarray*}
\mathbf{1}\left\{x\right\} = \left\{
\begin{aligned}
&1, \mathrm{~if~}x>0,\\
&0, \mathrm{~if~}x \leq 0.
\end{aligned}
\right.
\end{eqnarray*}
Under the shifted unit step function, the long-term average penalty gives the probability on AoI exceeding threshold $\beta$. This function should be considered when there is a certain upper-bound constraint of AoI that the system tries not to violate.
\end{enumerate}
\begin{figure}[t]
%\vspace*{4pt}
\centering
\includegraphics[width=3in]{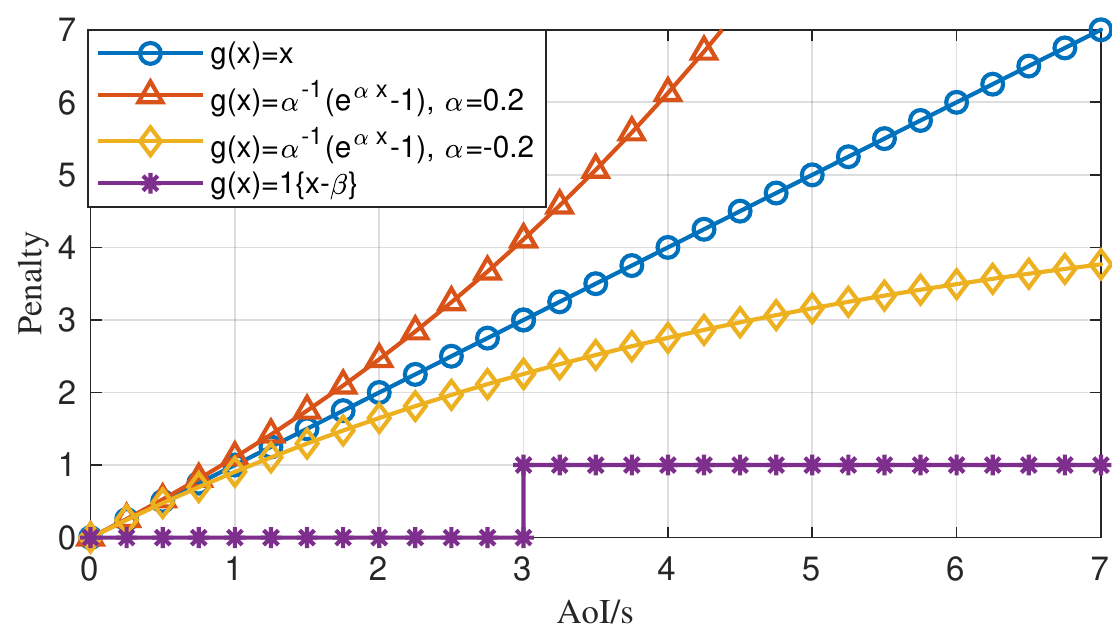}
\vspace*{-0.15in}
\caption{Three penalty functions to be analyzed. In the figure, the parameters are $\alpha = \pm0.2$ and $\beta = 3$. }
\label{fig:func}
\vspace*{-0.2in}
\end{figure}

In the following two subsections, the formulas for the average penalty under the three types of penalty functions will be listed with preliminary intuition towards the expressions. The results will be thoroughly analyzed together with figures in Section \ref{sec:anal}.
\subsection{FCFS Discipline}
\label{sec:F}
For FCFS descipline, we have the following results.
\begin{lemma}
\label{lemma:FCFS1}
Under the FCFS discipline, with status packet arrivals and energy packet arrivals being Poisson processes of rate $\lambda=r\theta$ and $r$, and the service time being negligible, the arrival rate $\tilde{\lambda}$ of valid updates is
$$\tilde{\lambda}_\mathrm{FCFS} = \lambda \frac{1-\theta^{K+B}}{1-\theta^{K+B+1}}. $$
\end{lemma}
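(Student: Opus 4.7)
The plan is to exploit the equivalence, already established before the lemma statement, between the negligible-service-time system and an M/M/1/$(K+B)$ queue with arrival rate $\lambda$, service rate $r$, and utilization $\theta=\lambda/r$. Under FCFS every status packet that is admitted into the data buffer is a valid update (as explicitly noted in Section \ref{sec:model}), so the rate of valid updates is exactly the \emph{effective} arrival rate of this M/M/1/$(K+B)$ queue, i.e., $\tilde{\lambda}_{\mathrm{FCFS}}=\lambda(1-P_{\mathrm{block}})$, where $P_{\mathrm{block}}$ is the stationary probability that an arriving status packet finds the data buffer full.

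First I would identify which aggregated state $S$ corresponds to a full buffer. Since $S=q_1-q_2$ with $S\in\{-B,\dots,K\}$ and $q_1q_2=0$ on the recurrent class, the buffer is full precisely at $S=K$ (mapping to queue length $S+B=K+B$ in the equivalent M/M/1/$(K+B)$). Next I would write down the stationary distribution of the M/M/1/$(K+B)$ queue in the standard truncated-geometric form,
\begin{equation*}
\pi_S \;=\; \frac{(1-\theta)\,\theta^{S+B}}{1-\theta^{K+B+1}}, \qquad S\in\{-B,\dots,K\},
\end{equation*}
which follows from detailed balance on the birth-death chain in the bottom diagram of Fig.~\ref{fig:one-dim-transit}. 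By PASTA, the blocking probability seen by Poisson status arrivals equals the time-stationary probability $\pi_K$.

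Finally I would compute
\begin{equation*}
\tilde{\lambda}_{\mathrm{FCFS}} \;=\; \lambda\bigl(1-\pi_K\bigr) \;=\; \lambda\cdot\frac{1-\theta^{K+B+1}-(1-\theta)\theta^{K+B}}{1-\theta^{K+B+1}} \;=\; \lambda\cdot\frac{1-\theta^{K+B}}{1-\theta^{K+B+1}},
\end{equation*}
after cancelling the $\theta^{K+B+1}$ terms in the numerator. The only subtle point is justifying the use of the effective-arrival formula: one must argue that the transient states with $q_1q_2\neq 0$ contribute zero stationary mass (they are vacated instantaneously as observed in the derivation of Fig.~\ref{fig:one-dim-transit}), so PASTA indeed applies on the recurrent class. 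Apart from that, the argument is a direct bookkeeping computation, so I do not expect any real obstacle; the main thing to be careful about is the index shift between $q_1$ and $S$ and the consequent identification of ``buffer full'' with the top state $S=K$ rather than $S=K+B$.
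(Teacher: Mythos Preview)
Your proposal is correct and follows essentially the same approach as the paper: the paper also invokes the M/M/1 equivalence, writes down the same truncated-geometric stationary distribution, and applies PASTA to compute the probability that the buffer is not full upon arrival. The only cosmetic difference is that the paper sums $\sum_{j=-B}^{K-1}\pi_j$ directly while you compute $1-\pi_K$, which is of course the same quantity.
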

\begin{proof}
See Appendix \ref{proof:l1}.
\end{proof}

The proof for Lemma \ref{lemma:FCFS1} is straightforward: Since all the status packets admitted to the data buffer are valid updates under FCFS, the arrival rate of valid update equals the arrival rate of status packets that enter the buffer. Therefore, the arrival rate $\tilde{\lambda}$ of valid update under the FCFS discipline equals the throughput.

\begin{lemma}
\label{lemma:FCFS}
Under the FCFS discipline, with status packet arrivals and energy packet arrivals being Poisson processes of rate $\lambda=r\theta$ and $r$, and the service time being negligible, the CDFs of the peak AoI and the sojourn time of valid updates are
\begin{eqnarray}
\label{Eq:AF}
\mathrm{P}\left\{A\leq a\right\} &=&  1 - \frac{e^{-\lambda a}\theta^{-B}}{\theta^{-B}-\theta^{K}} - \frac{\theta^{-1}e^{-ra}}{\theta^{-B}-\theta^{K}}\sum_{n=1}^{K}\frac{\left(\lambda a\right)^{n}}{n!}+ \frac{\theta^{K}e^{-ra}}{\theta^{-B}-\theta^{K}}\sum_{n=0}^{K}\frac{\left(ra\right)^{n}}{n!},
\end{eqnarray}
and
\begin{eqnarray}
\label{Eq:TF}
\mathrm{P}\left\{T\leq t\right\} &=& 1 - \frac{1}{\theta^{-B} - \theta^{K}}\sum_{n=0}^{K}\frac{\left(\lambda t\right)^n}{n!}e^{-rt} + \frac{\theta^{K}}{\theta^{-B}-\theta^{K}}\sum_{n=0}^{K}\frac{\left(rt\right)^n}{n!}e^{-rt},
\end{eqnarray}
respectively.
\end{lemma}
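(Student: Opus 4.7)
My plan is to use the equivalent one-dimensional M/M/1/(K+B) chain already established in the paper, whose truncated-geometric stationary distribution is $\pi_s=(1-\theta)\theta^{s+B}/(1-\theta^{K+B+1})$ on $\{-B,\ldots,K\}$. By PASTA, a Poisson status arrival sees this distribution, and conditioning on admission (the buffer is not full) gives $P(S^-=s)=\pi_s/(1-\pi_K)$ for $s\in\{-B,\ldots,K-1\}$ as the state observed by a valid update. For the sojourn: if $s<0$ the arriving packet immediately annihilates with a stored energy packet (so $T=0$), while if $s\ge 0$ it joins the back of a FIFO queue with $s$ packets ahead and must wait for $s+1$ subsequent Poisson-rate-$r$ energy arrivals, giving $T\mid S^-=s\sim\mathrm{Erlang}(s+1,r)$. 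I would then substitute the Erlang CDF, swap the order of the double summation (over $s$ and over the Erlang index $k$), and apply $\sum_{s=k}^{K-1}\theta^s=(\theta^k-\theta^K)/(1-\theta)$; the expression collapses to the two polynomial-in-$rt$ sums in (\ref{Eq:TF}).

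\textbf{Peak AoI.} For $A=X+T$, I would condition on the state $S^+_{i-1}=s_0$ just after the previous valid update's admission; its distribution follows again from PASTA as $\pi_{s_0-1}/(1-\pi_K)$ on $\{-B+1,\ldots,K\}$. The Markov property makes the whole interval from the $(i-1)$-th arrival to the $i$-th delivery depend only on $s_0$ and the future independent Poisson streams, so I would split into two cases. If $s_0<K$, only energy arrivals (which can only decrease the state) can occur before the next data arrival, so that arrival is automatically admissible; hence $X\sim\mathrm{Exp}(\lambda)$ and $S^-_i=\max(s_0-N_e,-B)$ with $N_e\mid X=x\sim\mathrm{Poisson}(rx)$. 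If $s_0=K$, data arrivals are blocked until the first energy arrival drops the state to $K-1$, so $X$ is hypoexponential $\mathrm{Exp}(r)+\mathrm{Exp}(\lambda)$ and during the second phase the same thinning picture applies starting from $K-1$. In either case, conditional on $S^-_i$, the sojourn $T$ is $0$ (if $S^-_i<0$) or $\mathrm{Erlang}(S^-_i+1,r)$ (if $S^-_i\ge 0$), and is independent of $X$. Writing
\[P(A\le a)=\sum_{s_0}\frac{\pi_{s_0-1}}{1-\pi_K}\int_0^a f_{X\mid s_0}(x)\sum_s P(S^-_i=s\mid X=x,s_0)\,P(T\le a-x\mid S^-_i=s)\,dx,\]
the $s_0<K$ terms produce an $e^{-\lambda a}$ contribution (from the event ``no energy arrived during $X$,'' which forces a genuine wait for service) together with $e^{-ra}$ polynomial contributions, while the $s_0=K$ term contributes an additional $e^{-ra}$ polynomial coming from the initial $\mathrm{Exp}(r)$ delay. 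Collecting terms via the same geometric identity used for $T$ gives (\ref{Eq:AF}).

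\textbf{Main obstacle.} The principal difficulty is the algebraic assembly for $A$: a triple sum (over $s_0$, over the Poisson count of energy arrivals during $X$, and over the Erlang polynomial in $a-x$) must be reorganized so that the exponentials $e^{-\lambda a}$ and $e^{-ra}$ separate cleanly and the polynomial coefficients collapse via truncated geometric summation. A useful organizational device is to handle the boundary case $s_0=K$ by treating it as ``wait $\mathrm{Exp}(r)$, then restart from $s_0=K-1$,'' reducing the analysis to the single generic case plus a weighted $\mathrm{Exp}(r)$-shifted copy; the weights then recombine with $\pi_{K-1}$ and $\pi_K$ to produce exactly the $\theta^{-B}$-, $\theta^{-1}$-, and $\theta^K$-prefactored terms of (\ref{Eq:AF}).
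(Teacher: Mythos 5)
Your proposal is correct and follows essentially the same route as the paper's Appendix B: condition on the state left by the previous valid update, split off the buffer-full boundary case (your $s_0=K$ is the paper's $S_{n-1}^-=K-1$, with the same hypoexponential inter-arrival law), track $S_i^-$ by Poisson thinning of energy arrivals during $X$, use the Erlang$(s+1,r)$ conditional sojourn with the same conditional-independence step, and assemble by total probability. Your ``wait $\mathrm{Exp}(r)$, then restart from $K-1$'' device is just a cleaner packaging of the convolution integral the paper evaluates explicitly for that boundary case.
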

\begin{proof}
See Appendix \ref{proof:l2}.
\end{proof}

According to Eq. (\ref{Eq:int}), the average penalty under FCFS discipline can be obtained by Lemma \ref{lemma:FCFS1} and \ref{lemma:FCFS}. For the three special cases of penalty function being $g(\Delta) = \Delta$, $g(\Delta) = \alpha^{-1}\left(e^{\alpha\Delta}-1\right)$, and $g(\Delta) = \mathbf{1}\left\{\Delta-\beta\right\}$, we obtain the long-term average penalty by Eq. (\ref{Eq:int}) and summarize the results in Theorem \ref{thm:FL}--\ref{thm:FS}.
\begin{theorem}
\label{thm:FL}
Under the FCFS discipline, with status packet arrivals and energy packet arrivals being Poisson processes of rate $\lambda=r\theta$ and $r$, and the service time being negligible, the average AoI is
\begin{equation}
\label{Eq:aveF}
C_\mathrm{FCFS,~linear} = \lambda^{-1} + r^{-1}\frac{\theta}{\theta^{-B}-\theta^{K+1}}\left[-K\theta^{K} + \frac{1 + \theta^{K-1} - 3\theta^{K} + \theta^{K+1}}{1-\theta}\right].
\end{equation}
\end{theorem}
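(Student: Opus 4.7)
The plan is to combine Lemma \ref{lemma:FCFS1} and Lemma \ref{lemma:FCFS} through Eq. (\ref{Eq:int}) specialized to $g(\Delta)=\Delta$: then $G(x)=x^2/2$ and Eq. (\ref{Eq:int}) reduces to Eq. (\ref{Eq:ageo}), i.e., $\bar\Delta=\frac{\tilde\lambda}{2}\bigl(\mathbb{E}[A^2]-\mathbb{E}[T^2]\bigr)$. The problem therefore breaks into two subtasks: compute the second moments of $A$ and $T$ from the CDFs in Lemma \ref{lemma:FCFS}, and then multiply the difference by $\tilde\lambda/2$ supplied by Lemma \ref{lemma:FCFS1} and simplify.

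For the first subtask, I would use $\mathbb{E}[X^2]=2\int_0^\infty x\,\mathrm{P}\{X>x\}\,\mathrm{d}x$ together with the standard integral $\int_0^\infty x^{n+1}e^{-\mu x}\,\mathrm{d}x=(n+1)!/\mu^{n+2}$. Applied term by term to the tails $\mathrm{P}\{A>a\}$ and $\mathrm{P}\{T>t\}$ read off from Lemma \ref{lemma:FCFS}, this produces finite sums of the form $\sum_{n=1}^{K}(n+1)\theta^{n-1}$ and $\sum_{n=0}^{K}(n+1)\theta^{n}$, each sharing the common prefactor $(\theta^{-B}-\theta^K)^{-1}$.

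The conceptual shortcut is to observe a cancellation when forming $\mathbb{E}[A^2]-\mathbb{E}[T^2]$: the third summand in each CDF (the one proportional to $\theta^K\sum_{n=0}^{K}(ra)^n/n!$) contributes identically to both moments and drops out. What remains is an exponential-tail piece $2\theta^{-B}/[\lambda^2(\theta^{-B}-\theta^K)]$ plus the difference of the two polynomial sums, which after an index shift telescopes to $\frac{1-\theta^{K}}{1-\theta}-(K+1)\theta^{K}$.

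The final step is algebraic bookkeeping. Using $\theta^{-B}-\theta^K=\theta^{-B}(1-\theta^{K+B})$ cancels the $(1-\theta^{K+B})$ factor coming from $\tilde\lambda$, yielding $\bar\Delta=\frac{1}{\lambda(1-\theta^{K+B+1})}+\frac{\lambda\theta^B}{r^2(1-\theta^{K+B+1})}\bigl[\tfrac{1-\theta^K}{1-\theta}-(K+1)\theta^K\bigr]$. Splitting $\frac{1}{\lambda(1-\theta^{K+B+1})}=\lambda^{-1}+\frac{\theta^{K+B}}{r(1-\theta^{K+B+1})}$ and absorbing the extra $\theta^{K+B}$ into the bracket via $\theta^{B+1}\cdot\theta^{K-1}=\theta^{K+B}$ (together with $\lambda=r\theta$), then putting everything over $1-\theta$, rewrites the bracket as $-K\theta^K+\frac{1+\theta^{K-1}-3\theta^K+\theta^{K+1}}{1-\theta}$, and finally replacing $\theta^{B+1}/(1-\theta^{K+B+1})$ by $\theta/(\theta^{-B}-\theta^{K+1})$ produces the stated form. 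I expect this last rearrangement to be the main obstacle; the cancellation of the $\theta^K$ summands and the telescoping of the polynomial differences are what keep the algebra manageable and deliver the closed form.
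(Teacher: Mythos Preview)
Your proposal is correct and follows exactly the approach the paper indicates: plug Lemma~\ref{lemma:FCFS1} and Lemma~\ref{lemma:FCFS} into Eq.~(\ref{Eq:int}) with $g(\Delta)=\Delta$, which reduces to Eq.~(\ref{Eq:ageo}). The paper gives no further detail beyond this instruction, so your explicit computation of $\mathbb{E}[A^2]-\mathbb{E}[T^2]$ via tail integrals, the cancellation of the $\theta^K\sum(ra)^n/n!$ terms, the telescoping to $\frac{1-\theta^K}{1-\theta}-(K+1)\theta^K$, and the final algebraic rearrangement are precisely what the paper leaves implicit; I have checked each step and they are all sound.
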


Consider an average-power-constrained scenario, where the average energy usage rate is limited by the energy arrival rate $r$. Since $\theta$ is less than $1$, each status packet can be served as soon as it arrives at the system, thus the average AoI is $\lambda^{-1}$. It is exactly the first term at the right-hand sides of Eq. (\ref{Eq:aveF}). As battery capacity $B$ increases, the average AoI decreases and approaches the average AoI under the average-power-constrained scenario. This result offers guidance to the selection of the battery capacity in an energy harvesting status update system. For example, given the system requirement that the average AoI should not exceed $\Delta_\mathrm{max}$, the battery capacity must be greater than
$$B_\mathrm{min} = \log_\theta\frac{\lambda\Delta_\mathrm{max}-1}{\lambda\theta^K\Delta_\mathrm{max} + \left(-K\theta^{K} - \theta^{K} + \frac{1 - \theta^{K}}{1-\theta}\right)\theta^2}. $$
Similar characteristics can be found in Theorems \ref{thm:FE}--\ref{thm:LS} for the other penalty functions as well as under the LFCS discipline.

Additionally, the average AoI under the FCFS discipline is derived in \cite{8406846}\cite{8437904} by SHS for a system where only one status packet is allowed in the system if the battery is not empty (identical to the case $K=0$ if service rate $\mu\to\infty$). The average AoI (see \cite[Eq. (7)]{8406846} and \cite[Eq. (9)]{8437904}) matches Eq. (\ref{Eq:aveF}) under the conditions $K=0$ and $\mu\to\infty$, which further verifies Theorem \ref{thm:FL}.

\begin{theorem}
\label{thm:FE}
Under the FCFS discipline, with status packet arrivals and energy packet arrivals being Poisson processes of rate $\lambda=r\theta$ and $r$, and the service time being negligible, with penalty function $g\left(\Delta\right) = \alpha^{-1}\left(e^{\alpha\Delta}-1\right), \alpha < \lambda$, the average penalty is
\begin{eqnarray}
\label{Eq:thmFE}
C_\mathrm{FCFS,~exp} =\left\{
\begin{aligned}
&\frac{1}{\lambda-\alpha} + \frac{r\alpha^{-1}}{\theta^{-B}-\theta^{K+1}} \left[\frac{\theta^{K+2}}{\lambda-\alpha} + \left(1-\theta\right)\frac{1 - \frac{\lambda^{K+1}}{\left(r-\alpha\right)^{K+1}}}{r-\alpha-\lambda} - \frac{1}{r-\alpha}\right], \mathrm{if}~\alpha \neq r-\lambda,\\
&\frac{1}{\lambda-\alpha} + \frac{r^{-1}}{\theta^{-B}-\theta^{K+1}} \left[\frac{\theta^{K+2}-2\theta+1}{(2\theta-1)(1-\theta)} + \frac{K}{\theta}\right], \mathrm{if}~\alpha = r-\lambda.
\end{aligned}
\right.
\end{eqnarray}
\end{theorem}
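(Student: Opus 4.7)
The plan is to apply the master formula Eq.\ (\ref{Eq:int}) with $g(\Delta)=\alpha^{-1}(e^{\alpha\Delta}-1)$, so that
\[
G(x)=\int_0^x\alpha^{-1}(e^{\alpha\Delta}-1)\,\mathrm{d}\Delta=\alpha^{-2}(e^{\alpha x}-1)-\alpha^{-1}x,
\]
and hence
\[
C=\tilde\lambda\alpha^{-2}\bigl(\mathbb{E}[e^{\alpha A}]-\mathbb{E}[e^{\alpha T}]\bigr)-\tilde\lambda\alpha^{-1}\bigl(\mathbb{E}[A]-\mathbb{E}[T]\bigr).
\]
For the linear (first-moment) piece, I would invoke Eq.\ (\ref{Eq:page}), which gives $\mathbb{E}[A]-\mathbb{E}[T]=\mathbb{E}[X]=1/\tilde\lambda$, so this piece collapses to $-\alpha^{-1}$. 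The nontrivial work is thus evaluating $\mathbb{E}[e^{\alpha A}]-\mathbb{E}[e^{\alpha T}]$ from the CDFs in Lemma \ref{lemma:FCFS}.

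Next I would express each MGF by the tail-integration identity $\mathbb{E}[e^{\alpha X}]=1+\alpha\int_0^\infty e^{\alpha x}\mathrm{P}\{X>x\}\,\mathrm{d}x$ (valid for $\alpha<\lambda$, which guarantees convergence because every term in the tails decays at least as $e^{-\lambda x}$ since $r>\lambda$). Subtracting the two tails from Lemma \ref{lemma:FCFS} I expect the $\theta^K e^{-ra}\sum_{n=0}^{K}(ra)^n/n!$ contributions to cancel exactly, leaving
\[
\mathrm{P}\{A>a\}-\mathrm{P}\{T>a\}=\frac{1}{\theta^{-B}-\theta^{K}}\left[\theta^{-B}e^{-\lambda a}-e^{-ra}+(\theta^{-1}-1)e^{-ra}\sum_{n=1}^{K}\frac{(\lambda a)^{n}}{n!}\right].
\]
Term-by-term integration uses only the two elementary formulas $\int_0^\infty e^{(\alpha-\lambda)a}\,\mathrm{d}a=1/(\lambda-\alpha)$ and $\int_0^\infty e^{(\alpha-r)a}(\lambda a)^n/n!\,\mathrm{d}a=\lambda^n/(r-\alpha)^{n+1}$, converting the polynomial sum into the geometric sum $\sum_{n=1}^{K}[\lambda/(r-\alpha)]^n$. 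I would then simplify the prefactor using the identity $\tilde\lambda/(\theta^{-B}-\theta^{K})=\lambda/(\theta^{-B}-\theta^{K+1})$, which follows directly from Lemma \ref{lemma:FCFS1} after multiplying numerator and denominator by $\theta^{-B}$.

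The case split arises from the geometric sum. When $\alpha\neq r-\lambda$ the ratio $\lambda/(r-\alpha)\neq 1$ and the closed form $\sum_{n=1}^{K}x^{n}=x(1-x^{K})/(1-x)$ yields $(1-\theta)[1-\lambda^{K+1}/(r-\alpha)^{K+1}]/(r-\alpha-\lambda)$ after using $\lambda=r\theta$. When $\alpha=r-\lambda$ the ratio equals $1$ and the sum degenerates to $K$; carrying $K$ through the same algebra produces the second branch of Eq.\ (\ref{Eq:thmFE}). Finally I would verify that the standalone $-\alpha^{-1}$ term combines with the coefficient of $1/(\lambda-\alpha)$ from the $e^{-\lambda a}$ contribution to produce exactly the separated summand $1/(\lambda-\alpha)$ in the answer; this rests on the algebraic identity $\lambda\alpha^{-1}\theta^{1-B}/[(\theta^{-B}-\theta^{K+1})(\lambda-\alpha)]-\alpha^{-1}=1/(\lambda-\alpha)+r\alpha^{-1}\theta^{K+2}/[(\theta^{-B}-\theta^{K+1})(\lambda-\alpha)]$, which reduces to the elementary fact $r\theta=\lambda$.

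The main obstacle is not conceptual but bookkeeping: reorganizing the four terms in $\mathrm{P}\{A\leq a\}$ and three terms in $\mathrm{P}\{T\leq t\}$ so that the cancellations are transparent, and then matching the resulting expression to the stated compact form; the degenerate case $\alpha=r-\lambda$ has to be handled by re-summing rather than by a limiting argument, since a blind $\alpha\to r-\lambda$ limit in the first branch gives a $0/0$ indeterminacy in $[1-\lambda^{K+1}/(r-\alpha)^{K+1}]/(r-\alpha-\lambda)$ that must be resolved to the value $(K+1)/\lambda$ for consistency with the second branch.
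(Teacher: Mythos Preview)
Your proposal is correct and is precisely the approach the paper intends: the paper gives no detailed proof for Theorem~\ref{thm:FE} but simply states that the result follows by substituting the CDFs of Lemma~\ref{lemma:FCFS} and the rate $\tilde\lambda$ of Lemma~\ref{lemma:FCFS1} into the master formula Eq.~(\ref{Eq:int}). Your outline executes exactly this computation, and the intermediate steps---the cancellation of the $\theta^{K}e^{-ra}\sum(ra)^n/n!$ terms in the tail difference, the conversion of the polynomial-times-exponential integrals into the geometric sum $\sum_{n=1}^{K}[\lambda/(r-\alpha)]^n$, the prefactor identity $\tilde\lambda/(\theta^{-B}-\theta^{K})=\lambda/(\theta^{-B}-\theta^{K+1})$, and the algebraic collapse of the $e^{-\lambda a}$ contribution with the standalone $-\alpha^{-1}$ into $1/(\lambda-\alpha)+r\alpha^{-1}\theta^{K+2}/[(\theta^{-B}-\theta^{K+1})(\lambda-\alpha)]$---are all valid; the shortcut $\mathbb{E}[A]-\mathbb{E}[T]=\mathbb{E}[X]=1/\tilde\lambda$ via Eq.~(\ref{Eq:page}) is a clean way to dispose of the linear piece.
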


Note that Eq. (\ref{Eq:thmFE}) holds for $\alpha\in[0,\lambda)$ as well as $\alpha\in(-\infty,0)$. The exponent $\alpha$ needs to be strictly smaller than status packet arrival rate $\lambda$, or there is an unbounded average penalty. In this sense, the average penalty under an exponential penalty function with exponent $\alpha>0$ is highly sensitive to the status packet arrival rate $\lambda$.

\begin{theorem}
\label{thm:FS}
Under the FCFS discipline, with status packet arrivals and energy packet arrivals being Poisson processes of rate $\lambda=r\theta$ and $r$, and the service time being negligible, with penalty function $g\left(\Delta\right) = \mathbf{1}\left\{\Delta\geq\beta\right\}$, the average penalty is
\begin{eqnarray}
\label{Eq:SF}
C_\mathrm{FCFS,~step}= e^{-\lambda \beta} + \frac{e^{-r\beta}}{\theta^{-B}-\theta^{K+1}}\sum_{i=0}^{K}\frac{(\lambda\beta)^i}{i!} - \frac{\theta^{K+1}e^{-r\beta}}{\theta^{-B}-\theta^{K+1}}\sum_{i=0}^{K}\frac{(r\beta)^i}{i!} + \frac{e^{-\lambda \beta}\theta^{K+1}-e^{-r \beta}}{\theta^{-B}-\theta^{K+1}}.
\end{eqnarray}
\end{theorem}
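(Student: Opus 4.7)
The plan is to invoke the general representation in Eq.~(\ref{Eq:int}) with the antiderivative $G(x)=\int_0^x \mathbf{1}\{y\geq\beta\}\,\mathrm{d}y=(x-\beta)^+$, which gives
$$C_{\mathrm{FCFS,~step}}=\tilde\lambda\bigl(\mathbb{E}[(A-\beta)^+]-\mathbb{E}[(T-\beta)^+]\bigr)=\tilde\lambda\int_\beta^\infty\bigl[\Prob{A>a}-\Prob{T>a}\bigr]\,\mathrm{d}a,$$
where the second equality is the standard identity $\mathbb{E}[(X-\beta)^+]=\int_\beta^\infty\Prob{X>x}\,\mathrm{d}x$ obtained by integration by parts. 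Conceptually, this just encodes the fact that within every inter-update interval AoI rises linearly from $T_{i-1}$ to $A_i$, so the time spent above the threshold contributed by cycle $i$ is $(A_i-\beta)^+-(T_{i-1}-\beta)^+$.

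I would then plug in the complementary CDFs from Lemma~\ref{lemma:FCFS}. A convenient observation is that the $\theta^K e^{-ra}\sum_{n=0}^{K}(ra)^n/n!$ term is common to both $\Prob{A>a}$ and $\Prob{T>a}$ and cancels upon subtraction, so only an $e^{-\lambda a}$ contribution and a family of $(\lambda a)^n e^{-ra}/n!$ terms survive. Each of these integrates via the Erlang tail identity $\int_x^\infty u^n e^{-u}/n!\,\mathrm{d}u=e^{-x}\sum_{k=0}^{n}x^k/k!$; after the change of variable $u=ra$ this gives
$$\int_\beta^\infty\frac{(\lambda a)^n}{n!}e^{-ra}\,\mathrm{d}a=\frac{\theta^n}{r}e^{-r\beta}\sum_{k=0}^{n}\frac{(r\beta)^k}{k!}.$$

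At this point the remaining computation is a double sum over $n$ and $k$, and the key algebraic manoeuvre is a telescoping step: after interchanging the order of summation and shifting the index $n\to n-1$ inside the $\theta^{-1}\sum_{n=1}^{K}$ piece coming from $\Prob{A>a}$, the inner difference collapses to a single term $(r\beta)^{m+1}/(m+1)!$. Combined with the $\theta^m$ factor and the identity $(r\beta)^{m+1}\theta^m=\theta^{-1}(\lambda\beta)^{m+1}$, this is precisely what converts the natural $(r\beta)^k$ expansion produced by the Erlang identity into the mixed $(\lambda\beta)^i$-and-$(r\beta)^i$ summations that appear in Eq.~(\ref{Eq:SF}). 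The leftover pure $\theta^K\sum_{k=0}^{K}(r\beta)^k/k!$ piece stays intact and furnishes the second sum in the final answer.

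Finally I would multiply through by $\tilde\lambda=\lambda(1-\theta^{K+B})/(1-\theta^{K+B+1})$ from Lemma~\ref{lemma:FCFS1}, absorb the prefactor $1/(\theta^{-B}-\theta^{K})$ by using $\theta^{-B}-\theta^{K+1}=\theta^{-B}(1-\theta^{K+B+1})$, and regroup the constant and $e^{-\lambda\beta}$ pieces (noting the identity $1+\theta^{K+B+1}/(1-\theta^{K+B+1})=1/(1-\theta^{K+B+1})$) to recover Eq.~(\ref{Eq:SF}). The main obstacle is not any one step individually but the bookkeeping of the double sum together with the $\theta^{\pm B}$, $\theta^{K+1}$ and $\tilde\lambda/\lambda$ factors; the telescoping after index-shifting is the critical trick that produces the $(\lambda\beta)^i$ summation, without which the expression would remain a double sum in $(r\beta)^k$.
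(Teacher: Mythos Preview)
Your proposal is correct and follows exactly the route the paper indicates: apply Eq.~(\ref{Eq:int}) with $G(x)=(x-\beta)^+$ together with Lemmas~\ref{lemma:FCFS1} and~\ref{lemma:FCFS}. The paper itself omits the computational details for Theorem~\ref{thm:FS}, so your cancellation of the common $\theta^K e^{-ra}\sum (ra)^n/n!$ term, the Erlang tail integral, and the index-shift/telescoping step that turns the double sum into $\sum_{i=0}^{K}(\lambda\beta)^i/i!$ simply make explicit what the paper leaves to the reader.
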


The average penalty with penalty function $g\left(\Delta\right) = \mathbf{1}\left\{\Delta\geq\beta\right\}$ equals the probability that the AoI exceeding threshold $\beta$. In \cite{7415972}, the peak AoI is proposed as a metric that is suitable in the applications where the AoI is supposed to be lower than a given bound. However, comparing the CDF of peak AoI in Eq. (\ref{Eq:AF}) and AoI's threshold violation probability in Eq. (\ref{Eq:SF}), it is observed that with the same threshold $\beta$, we have$$\mathrm{P}\left\{A>\beta\right\} \geq C_\mathrm{FCFS,~step},$$which implies that the peak AoI violation probability does not equal the AoI violation probability. Therefore, the peak AoI violation probability can only serve as an upper bound of the AoI violation probability.

\subsection{LCFS Discipline}
\label{sec:L}
Under the LCFS discipline, the average penalty is expected to be smaller than its FCFS counterpart, since the latest status packet is delivered first. In a queuing system with the LCFS discipline, the probability distribution of sojourn time and peak AoI are obtained and summarized in the following lemmas:
\begin{lemma}
Under the LCFS discipline, with status packet arrivals and energy packet arrivals being Poisson processes of rate $\lambda$ and $r$, and the service time being negligible, the arrival rate $\tilde{\lambda}$ of valid updates is
\begin{equation}
\label{Eq:Llambda}
\tilde{\lambda}_\mathrm{LCFS} = \frac{1-\frac{\theta^{B+1}}{1+\theta} - \frac{\theta^{B+K+1}}{1+\theta}}{1-\theta^{B+K+1}}.
\end{equation}
\end{lemma}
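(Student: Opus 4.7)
The plan is to combine the characterization of valid updates under LCFS with the M/M/1/$(K+B)$ equivalence already established and with PASTA. Under LCFS, a freshly arriving status packet goes to the head of the data queue and is a valid update if and only if no further status packet arrives between its own arrival epoch and its departure epoch: any later arrival would displace it from the head and it would eventually be evicted or delivered stale.

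Since the state process $S = q_1 - q_2$ is the same M/M/1/$(K+B)$ birth-death chain used in the FCFS analysis (LCFS differs only in \emph{which} stored packet is served next, not in the $S$-dynamics, because an arrival at $S=K$ still leaves $S$ unchanged by evicting the oldest buffered packet), PASTA gives that an arriving status packet sees the stationary distribution $\pi_j = (1-\theta)\theta^j/(1-\theta^{K+B+1})$ for $j = 0,\dots,K+B$, where $j = S+B$. I would then split on the sign of $S$ at the arrival epoch. If $S<0$ the battery is non-empty, so negligible service time makes the packet depart instantly and it is automatically a valid update. If $S\ge 0$ the packet must wait for the next energy arrival; by the memoryless property of the two independent Poisson streams, the next event is an energy arrival with probability $r/(r+\lambda) = 1/(1+\theta)$, in which case the packet is delivered as a valid update, and otherwise a new status arrival displaces it.

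Combining the two cases yields
\begin{equation*}
\tilde\lambda_{\mathrm{LCFS}} = \lambda\,\mathrm{P}\{S<0\} + \frac{\lambda}{1+\theta}\,\mathrm{P}\{S\ge 0\}.
\end{equation*}
Summing the geometric stationary distribution gives $\mathrm{P}\{S<0\} = (1-\theta^B)/(1-\theta^{K+B+1})$ and $\mathrm{P}\{S\ge 0\} = (\theta^B - \theta^{K+B+1})/(1-\theta^{K+B+1})$. Substituting, placing over the common denominator $(1+\theta)(1-\theta^{K+B+1})$, and simplifying the numerator to $1+\theta-\theta^{B+1}-\theta^{K+B+1}$ produces the claimed expression after dividing by $1+\theta$.

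The main obstacle is conceptual rather than computational: one must justify that PASTA is applicable even though, under LCFS, an arrival at $S=K$ is \emph{not} lost in the ordinary sense — it evicts the oldest queued packet. This is legitimate precisely because the state $S$ tracks only the queue lengths and not packet identities, so the eviction leaves $S$ unchanged and the $S$-process remains the M/M/1/$(K+B)$ birth-death chain whose time-average and arrival-epoch distributions coincide. Once this point is settled, the case split and the simple memoryless-property argument complete the proof.
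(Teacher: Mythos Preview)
Your proposal is correct and follows essentially the same route as the paper: both characterize a valid update as a packet that either finds the battery non-empty (so $S<0$) or, when $S\ge 0$, sees an energy arrival before the next status arrival with probability $1/(1+\theta)$, and both then invoke PASTA and the M/M/1/$(K{+}B)$ stationary distribution to sum the two cases. Your added remark that the $S$-dynamics are identical under FCFS and LCFS (so PASTA remains applicable despite the eviction rule at $S=K$) is a useful clarification that the paper leaves implicit.
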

\begin{proof}
See Appendix \ref{proof:l3}.
\end{proof}

Under the LCFS discipline when the buffer capacity $K>1$, outdated status packets is going to be delivered to the receiver unless they are pushed out of the buffer by fresher status packets. Therefore, the arrival rate of valid updates is smaller than the throughput of the system. The valid updates under the LCFS discipline consists of all status packets that arrive at the buffer when the battery is not empty, and those delivered before any other status packets arrive. It is observed from Eq. (\ref{Eq:Llambda}) that as the buffer capacity $K$ increases, there is a decrease in valid updates. The reason is that outdated status packets in the buffer is going to consume energy packets, which reduces the probability of a newly arrived status packet being a valid update.

\begin{lemma}
Under the LCFS discipline, with status packet arrivals and energy packet arrivals being Poisson processes of rate $\lambda=r\theta$ and $r$, and the service time being negligible, the CDFs of the sojourn time of valid updates and the peak AoI are expressions as Eq. (\ref{Eq:sojourn}) and (\ref{Eq:peak}).
\end{lemma}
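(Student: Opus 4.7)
The plan is to leverage the truncated-M/M/1 equivalence established earlier in Section \ref{sec:a} and to classify every status arrival by the pre-arrival state $S$. The stationary distribution is geometric, $\pi_k=(1-\theta)\theta^{k+B}/(1-\theta^{K+B+1})$ for $k\in\{-B,\ldots,K\}$, and by PASTA an arriving status packet sees $S=k$ with probability $\pi_k$. The LCFS dynamics split into two cases: (i) $k<0$, where the packet is served instantaneously, so $T=0$ and it is automatically a valid update; and (ii) $k\geq 0$, where the packet is placed on top of the LCFS buffer and, by the competing-exponentials property of Poisson processes, the next event is an energy arrival with probability $r/(r+\lambda)=1/(1+\theta)$, in which case the packet is valid with sojourn $T\sim\mathrm{Exp}(\lambda+r)$, while otherwise another status arrival displaces it and it is not valid.

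For the sojourn-time CDF I would apply Bayes' rule conditioning on ``valid'' to write $\Prob{T\leq t}$ as a mixture of a point mass at $0$ with weight $q_0=\lambda\sum_{k<0}\pi_k/\tilde{\lambda}_\mathrm{LCFS}$ and an $\mathrm{Exp}(\lambda+r)$ law with weight $1-q_0$; substituting $\pi_k$ and $\tilde{\lambda}_\mathrm{LCFS}$ from the preceding lemma then yields Eq.~(\ref{Eq:sojourn}). For the peak AoI I would use $A_i=D_{i-1}+T_{i-1}$ and condition on the post-departure state $s$ of the $(i-1)$-th valid update, which by the above classification is a known discrete mixture over $\{-B+1,\ldots,K-1\}$. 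By the strong Markov property, $T_{i-1}$ and $D_{i-1}$ are conditionally independent given $s$, so the CDF of $A$ reduces to a convolution averaged over $s$. The distribution of $D_{i-1}$ given $s$ is a first-passage time: the state drifts downward by $\mathrm{Exp}(r)$ jumps (reflected at $S=-B$) until the next status arrival, at which point either the pre-arrival state is negative and the packet itself is the next valid update, or the state is nonnegative and validity is decided by whether the next event is energy (probability $1/(1+\theta)$, contributing $\mathrm{Exp}(\lambda+r)$) or a further status that restarts the process (probability $\theta/(1+\theta)$).

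The main obstacle is this first-passage bookkeeping for $D_{i-1}$: one must simultaneously manage partial sums of Poisson masses $(\lambda t)^n e^{-\lambda t}/n!$ and $(rt)^n e^{-rt}/n!$ over the numbers of energy arrivals between status arrivals, the boundary behavior at $S=-B$ and $S=K$, and the geometric factor $\theta/(1+\theta)$ from the LCFS rejection events. I expect the resulting CDF of $A$ to inherit the $e^{-\lambda a}$ and $e^{-ra}$ structural skeleton of the FCFS analogue Eq.~(\ref{Eq:AF}), modified by the LCFS validity filter and by the two-component mixture structure of the sojourn $T_{i-1}$.
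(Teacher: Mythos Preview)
Your proposal is essentially the paper's proof. The sojourn-time argument is identical: condition on the pre-arrival state $S_n^-$, use PASTA to get its distribution among valid updates, and obtain the mixture of a Dirac mass at $0$ (for $S_n^-<0$) and an $\mathrm{Exp}(\lambda+r)$ law (for $S_n^-\ge 0$). For the peak AoI, the paper also writes $A=D+T$, conditions on the post-departure state $S_n^+$, and exploits the conditional independence of $T_n$ and $D_n$ given $S_n^+$; it then computes $\Prob{D_n\le d\mid S_n^+=i}$ by a direct event-pattern argument (classifying how many energy arrivals occur in $[0,d]$ and whether they all precede the first status arrival), rather than by your recursive first-passage description.

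One remark on your recursion: your ``restart'' step is unnecessary and slightly misleading. Once the first \emph{new} status packet has arrived after the $n$-th delivery and the pre-arrival state is nonnegative, the next valid delivery occurs exactly at the next energy arrival, because under LCFS every energy arrival thereafter serves the current head, which is always a new packet and hence a valid update. So the residual time is simply $\mathrm{Exp}(r)$, not a geometric sum of $\mathrm{Exp}(\lambda+r)$ pieces with restart probability $\theta/(1+\theta)$; the two are of course equivalent by thinning, but writing it directly removes the bookkeeping you flagged as the main obstacle. With this simplification your $D_n\mid S_n^+=s$ reduces to $D_n = \tau_1 + Z$, where $\tau_1\sim\mathrm{Exp}(\lambda)$ is the wait for the first new status and $Z=0$ or $Z\sim\mathrm{Exp}(r)$ according to whether the state at $\tau_1^-$ is negative, which is determined by the number of Poisson($r$) arrivals in $[0,\tau_1)$ relative to $s$. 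This matches the paper's combinatorial formulas and makes the convolution for $A$ straightforward. Also, your stated range $S_n^+\in\{-B+1,\dots,K-1\}$ is correct; the paper's text asserts $S_n^+=S_n^-$ for $T_n>0$, which is literally true only for $S_n^-<K$, but the final expressions are unaffected.
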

\begin{proof}
See Appendix \ref{proof:l4}.
\end{proof}

\begin{figure*}[t]
%\vspace*{4pt}
\normalsize
\begin{eqnarray}
\label{Eq:sojourn}
\mathrm{Prob}\left[T \leq t\right] = 1-e^{-\left(\lambda+r\right)t}\frac{\left(1-\theta^{K+1}\right)}{\left(\theta^{-B}-1\right)\left(1+\theta\right)+\left(1-\theta^{K+1}\right)}
\end{eqnarray}
\begin{eqnarray}
\label{Eq:peak}
&~&\mathrm{Prob}\left[A\leq a\right] \notag\\
&=& 1 - e^{-\lambda a}\frac{\left(\theta^{-B}-\theta^{K+1}\right)\left(1+\theta\right)}{\left(\theta^{-B}-1\right)\left(1+\theta\right)+\left(1-\theta^{K+1}\right)} - e^{-ra}(1+\theta)\frac{K\theta^{-1} + \frac{\theta^{K+1} - 2 + \theta^{-1}}{1-\theta}}{\left(\theta^{-B}-1\right)\left(1+\theta\right)+\left(1-\theta^{K+1}\right)} \notag\\
&~& + \frac{e^{-\left(\lambda+r\right)a}(1+\theta)}{\left(\theta^{-B}-1\right)\left(1+\theta\right)+\left(1-\theta^{K+1}\right)}\left\{\frac{- 1 + \theta^{K+1}}{1+\theta} + (K\theta^{-1} + \theta^{-1} - \frac{\theta}{1-\theta})\sum_{k=0}^{K}\frac{\left(r\theta a\right)^{k}}{k!}\right.\notag\\
&~&\left. - \theta^{-1}\sum_{k=0}^{K-1}\frac{\left(r\theta a\right)^{k+1}}{k!} + \frac{\theta^{K+2}}{1-\theta}\sum_{k=0}^{K}\frac{\left(ra\right)^{k}}{k!}\right\}
\end{eqnarray}
\hrulefill
%\vspace*{4pt}
\vspace*{-0.2in}
\end{figure*}

For the penalty function being $g(\Delta) = \Delta$, $g(\Delta) = \alpha^{-1}\left(e^{\alpha\Delta}-1\right)$, and $g(\Delta) = \mathbf{1}\left\{\Delta-\beta\right\}$, the long-term average penalty under LFCS is obtained by Eq. (\ref{Eq:int}) and given in Theorem \ref{thm:LL}--\ref{thm:LS}.
\begin{theorem}
\label{thm:LL}
Under the LCFS discipline, with status packet arrivals and energy packet arrivals being Poisson processes of rate $\lambda=r\theta$ and $r$, and the service time being negligible, the average AoI is
\begin{equation}
\label{Eq:aveL}
C_\mathrm{LCFS,~linear} = \lambda^{-1} + \frac{r^{-1}}{\theta^{-B}-\theta^{K+1}}\left[ \frac{(1-\theta)\theta^{K+1}}{(1+\theta)^{K+1}} - \theta^{K+1} + \theta\right].
\end{equation}
\end{theorem}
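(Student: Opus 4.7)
The plan is to apply the general identity (\ref{Eq:int}) with $g(\Delta)=\Delta$, for which $G(x)=x^{2}/2$. This reduces the target to the moment formula
\begin{equation*}
C_\mathrm{LCFS,~linear}=\frac{\tilde{\lambda}_\mathrm{LCFS}}{2}\bigl(\mathbb{E}[A^{2}]-\mathbb{E}[T^{2}]\bigr),
\end{equation*}
which is exactly Eq.~(\ref{Eq:ageo}) specialized to the linear case. Since $\tilde{\lambda}_\mathrm{LCFS}$ is already given in Eq.~(\ref{Eq:Llambda}) and the CDFs of $T$ and $A$ are available in Eqs.~(\ref{Eq:sojourn})--(\ref{Eq:peak}), what remains is purely to compute two moments and simplify.

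For $\mathbb{E}[T^{2}]$, Eq.~(\ref{Eq:sojourn}) immediately gives $1-F_T(t)=c_T\,e^{-(\lambda+r)t}$ where $c_T=(1-\theta^{K+1})/\bigl[(\theta^{-B}-1)(1+\theta)+(1-\theta^{K+1})\bigr]$, so by integration by parts
\begin{equation*}
\mathbb{E}[T^{2}]=2\int_0^{\infty} t\bigl(1-F_T(t)\bigr)\,\mathrm{d}t=\frac{2c_T}{(\lambda+r)^{2}}.
\end{equation*}
Note that $T$ carries a point mass $1-c_T$ at zero (the arrivals that find an empty battery and exit instantaneously), but this mass does not contribute to the second moment. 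The same trick, $\mathbb{E}[A^{2}]=2\int_0^{\infty}a(1-F_A(a))\,\mathrm{d}a$, avoids differentiating the complicated CDF in Eq.~(\ref{Eq:peak}).

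For $\mathbb{E}[A^{2}]$, I would expand $1-F_A(a)$ as a finite sum of terms $c_{k,\mu}\,a^{k}e^{-\mu a}$ with decay rates $\mu\in\{\lambda,\,r,\,\lambda+r\}$ and $k\in\{0,1,\ldots,K\}$, read directly from Eq.~(\ref{Eq:peak}). Each contribution to the integral reduces to $\int_0^{\infty}a^{k+1}e^{-\mu a}\,\mathrm{d}a=(k+1)!/\mu^{k+2}$, turning $\mathbb{E}[A^{2}]$ into a finite linear combination of reciprocals of $\lambda$, $r$, and $\lambda+r$ raised to small powers, together with two inner sums of the form $\sum_{k=0}^{K}(k+1)\theta^{k}$ or $\sum_{k=0}^{K}(k+1)\theta^{k+?}$ that admit a closed form through differentiation of the geometric series.

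The hard part will be the final algebraic simplification. After multiplying by $\tilde{\lambda}_\mathrm{LCFS}/2$ and using the key substitution $r\theta=\lambda$, the large denominator $(\theta^{-B}-1)(1+\theta)+(1-\theta^{K+1})$ appearing in both $\tilde{\lambda}_\mathrm{LCFS}$ and in $c_T$, $c_{k,\mu}$ must cancel against the prefactor produced by $\tilde{\lambda}_\mathrm{LCFS}$, leaving only the much cleaner $(\theta^{-B}-\theta^{K+1})$ in the denominator of the stated result. Demonstrating that the $\lambda^{-1}$ leading term emerges from the $\mathbb{E}[A^{2}]$ contribution associated with the pure exponential term $e^{-\lambda a}$ (whose coefficient is itself a ratio involving that same large denominator), while all remaining pieces consolidate into the compact bracket $\frac{(1-\theta)\theta^{K+1}}{(1+\theta)^{K+1}}-\theta^{K+1}+\theta$, is where the bookkeeping is delicate. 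I would verify the result by checking the two limits $B\to\infty$ (where the average AoI should collapse to the M/M/$\infty$-battery value $\lambda^{-1}$ plus a buffer-only correction) and $K=0$ (which should agree with the unit-buffer LCFS expressions obtained by SHS in \cite{8406846,8437904}).
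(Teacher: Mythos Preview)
Your proposal is correct and follows essentially the same route as the paper: the paper states explicitly that Theorems~\ref{thm:LL}--\ref{thm:LS} are obtained by plugging the CDFs of Lemma~4 and the rate $\tilde{\lambda}_\mathrm{LCFS}$ of Lemma~3 into Eq.~(\ref{Eq:int}), which for $g(\Delta)=\Delta$ is exactly the second-moment computation you outline. One trivial slip: the point mass of $T$ at zero comes from valid updates that find a \emph{non}-empty battery (not an empty one), but this does not affect your argument.
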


\begin{theorem}
\label{thm:LE}
Under the LCFS discipline, with status packet arrivals and energy packet arrivals being Poisson processes of rate $\lambda=r\theta$ and $r$, and the service time being negligible, with penalty function $g\left(\Delta\right) = \alpha^{-1}\left(e^{\alpha\Delta}-1\right), \alpha < \lambda$, the average penalty is
\begin{eqnarray}
C_\mathrm{LCFS,~exp} =  \frac{1}{\lambda-\alpha} + \frac{1}{(r-\alpha)^2}\frac{\lambda}{\left(\theta^{-B}-\theta^{K+1}\right)}\left[1 + \left(\frac{\lambda}{\lambda+r-\alpha}\right)^{K+1}\frac{r-\lambda}{\lambda-\alpha} - \theta^{K+1}\frac{r-\alpha}{\lambda-\alpha}\right].
\end{eqnarray}
\end{theorem}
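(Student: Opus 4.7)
The plan is to apply the master formula (\ref{Eq:int}) with $g(\Delta) = \alpha^{-1}(e^{\alpha\Delta}-1)$. Its antiderivative is $G(x) = \alpha^{-2}(e^{\alpha x}-1) - \alpha^{-1}x$, so $\mathbb{E}[G(X)] = \alpha^{-2}(\mathbb{E}[e^{\alpha X}]-1) - \alpha^{-1}\mathbb{E}[X]$ for $X\in\{A,T\}$. Subtracting the two instances, the constants cancel and Eq. (\ref{Eq:page}) gives $\mathbb{E}[A]-\mathbb{E}[T]=\mathbb{E}[X]=1/\tilde\lambda_\mathrm{LCFS}$, so the linear pieces collapse to $-\alpha^{-1}$, leaving
\[
C_\mathrm{LCFS,~exp} \;=\; \tilde{\lambda}_\mathrm{LCFS}\,\alpha^{-2}\bigl(\mathbb{E}[e^{\alpha A}] - \mathbb{E}[e^{\alpha T}]\bigr) - \alpha^{-1}.
\]
The task therefore reduces to computing the two moment generating functions from the CDFs (\ref{Eq:sojourn}) and (\ref{Eq:peak}).

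For the sojourn time, the CDF has the form $1 - c\,e^{-(\lambda+r)t}$, i.e., an atom of mass $1-c$ at the origin together with an exponential tail of rate $\lambda+r$, yielding $\mathbb{E}[e^{\alpha T}] = (1-c) + c(\lambda+r)/(\lambda+r-\alpha)$ under $\alpha<\lambda+r$. For the peak AoI, I would differentiate (\ref{Eq:peak}) and observe that the resulting density is a linear combination of $e^{-\lambda a}$, $e^{-ra}$, and terms $a^{k}e^{-(\lambda+r)a}$ for $0\le k\le K$. Each of these pairs with $e^{\alpha a}$ through the standard integral $\int_0^\infty a^{k} e^{-(\beta-\alpha)a}\,\mathrm{d}a = k!/(\beta-\alpha)^{k+1}$, which converges under $\alpha<\lambda$ since $r>\lambda$ by stability. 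The two finite sums $\sum_{k=0}^{K}\bigl[\lambda/(\lambda+r-\alpha)\bigr]^{k+1}$ and $\sum_{k=0}^{K}\bigl[r/(\lambda+r-\alpha)\bigr]^{k+1}$ then telescope by the geometric series identity; this is exactly where the factor $\bigl(\lambda/(\lambda+r-\alpha)\bigr)^{K+1}$ appearing in the statement is born.

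The main obstacle will be the algebraic consolidation. After gathering all contributions, the apparent $\alpha^{-2}$ singularity in the prefactor must cancel against first-order Taylor cancellations among the exponential terms, and this cancellation is precisely the mechanism that produces the $1/(r-\alpha)^2$ prefactor of the theorem. A useful bookkeeping device is to write every rate ratio in terms of $\theta=\lambda/r$, so that the common denominator $\theta^{-B}-\theta^{K+1}$ emerges naturally from the $\tilde{\lambda}_\mathrm{LCFS}$-times-normalization factor of (\ref{Eq:peak}), and the bracket $1 + \bigl[\lambda/(\lambda+r-\alpha)\bigr]^{K+1}(r-\lambda)/(\lambda-\alpha) - \theta^{K+1}(r-\alpha)/(\lambda-\alpha)$ appears after grouping the $e^{-\lambda a}$ contribution with the geometric-sum collapse. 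As sanity checks, I would verify that the limit $\alpha\to 0$ recovers Theorem \ref{thm:LL} and that sending $B\to\infty$ reproduces the expected unconstrained LCFS M/M/1 exponential-penalty value.
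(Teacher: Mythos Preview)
Your proposal is correct and follows exactly the route the paper takes: the paper does not give a standalone proof of Theorem~\ref{thm:LE} but simply states that it is obtained by substituting the CDFs of $A$ and $T$ from Lemma~4 and the rate $\tilde\lambda_\mathrm{LCFS}$ from Lemma~3 into the master formula~(\ref{Eq:int}). Your outline fills in precisely this computation, including the reduction to the two moment generating functions and the geometric-sum collapse that yields the $\bigl(\lambda/(\lambda+r-\alpha)\bigr)^{K+1}$ term.
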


\begin{theorem}
\label{thm:LS}
Under the LCFS discipline, with status packet arrivals and energy packet arrivals being Poisson processes of rate $\lambda=r\theta$ and $r$, and the service time being negligible, with penalty function $g\left(\Delta\right) = \mathbf{1}\left\{\Delta\geq\beta\right\}$, the average penalty is
\begin{eqnarray}
C_\mathrm{LCFS,~step}&=& e^{-\lambda\beta} + \frac{e^{-(r+ \lambda)\beta}}{\theta^{-B}-\theta^{K+1}}\left\{e^{\lambda\beta}\frac{\theta^{K+2}}{1-\theta} + \left(K + \frac{1-2\theta}{1-\theta}\right)\left(e^{\lambda\beta} - \sum_{i=0}^{K}\frac{(\lambda\beta)^i}{i!}\right) \right.\notag\\
&~&\left.+ \lambda\beta\sum_{i=0}^{K-1}\frac{(\lambda\beta)^{i}}{i!} - \frac{\theta^{K+2}}{1-\theta}\sum_{i=0}^{K}\frac{(r\beta)^i}{i!}\right\}.
\end{eqnarray}
\end{theorem}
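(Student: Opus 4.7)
The plan is to invoke Eq. (\ref{Eq:int}) directly with the step-function penalty. For $g(\Delta)=\mathbf{1}\{\Delta\geq\beta\}$ the antiderivative is $G(x)=(x-\beta)^+$, so by the standard tail identity $\mathbb{E}[(X-\beta)^+]=\int_\beta^\infty \mathrm{P}\{X>s\}\,ds$ (valid here since both $A$ and $T$ have exponentially decaying tails by the preceding lemma), Eq. (\ref{Eq:int}) collapses to
$$C_\mathrm{LCFS,\,step}=\tilde{\lambda}_\mathrm{LCFS}\int_\beta^\infty\bigl[\mathrm{P}\{A>s\}-\mathrm{P}\{T>s\}\bigr]\,ds.$$
The sojourn-time tail is a single exponential by Eq. (\ref{Eq:sojourn}) and integrates trivially; the peak-AoI tail from Eq. (\ref{Eq:peak}) contributes three families of integrals, namely $\int_\beta^\infty e^{-\lambda s}\,ds$, $\int_\beta^\infty e^{-rs}s^k\,ds$ and $\int_\beta^\infty e^{-(\lambda+r)s}s^k\,ds$ for $k=0,\ldots,K$, all evaluated with the closed form $\int_\beta^\infty s^k e^{-\gamma s}\,ds=\frac{k!}{\gamma^{k+1}}e^{-\gamma\beta}\sum_{i=0}^k\frac{(\gamma\beta)^i}{i!}$.

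Next, I would use the algebraic identity $\tilde{\lambda}_\mathrm{LCFS}=\dfrac{(\theta^{-B}-1)(1+\theta)+(1-\theta^{K+1})}{(1+\theta)(\theta^{-B}-\theta^{K+1})}$, obtained by clearing denominators in Eq. (\ref{Eq:Llambda}). Multiplying the integrated tails by this $\tilde{\lambda}_\mathrm{LCFS}$ cancels the common $(1+\theta)$ factors carried along from Eq. (\ref{Eq:peak}) and leaves the target prefactor $1/(\theta^{-B}-\theta^{K+1})$. The $e^{-\lambda s}$ piece, after using $\tilde{\lambda}_\mathrm{LCFS}\cdot\frac{(\theta^{-B}-\theta^{K+1})(1+\theta)}{N\cdot\lambda}=1$ (with $N$ the denominator of $\tilde{\lambda}_\mathrm{LCFS}$), detaches as the bare leading term $e^{-\lambda\beta}$ in the statement, while every remaining contribution carries the prefactor $\frac{e^{-(\lambda+r)\beta}}{\theta^{-B}-\theta^{K+1}}$ once one factors $e^{-r\beta}=e^{\lambda\beta}\cdot e^{-(\lambda+r)\beta}$.

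Finally, I would collect the mixed integrals. The coefficient of $s^k$ inside the braces of Eq. (\ref{Eq:peak}) is linear in $k$, so integrating and combining the resulting Poisson partial sums $\sum_{i=0}^k(\lambda\beta)^i/i!$ across $k=0,\ldots,K$ can be reorganized, using $r/(\lambda+r)=1/(1+\theta)$ and $\lambda/(\lambda+r)=\theta/(1+\theta)$, into the two sums $\sum_{i=0}^{K}(\lambda\beta)^i/i!$ and $\sum_{i=0}^{K-1}(\lambda\beta)^i/i!$ appearing in the claim, with coefficients $K+(1-2\theta)/(1-\theta)$ and $\lambda\beta$. The $e^{-r\beta}$ incomplete-Poisson sum $\sum_{i=0}^K(r\beta)^i/i!$ inherits the coefficient $\theta^{K+2}/(1-\theta)$ that already weights $\sum_{k=0}^K(r a)^k/k!$ in Eq. (\ref{Eq:peak}), and the separate $e^{\lambda\beta}\theta^{K+2}/(1-\theta)$ term arises from the ``$-1+\theta^{K+1}$'' and constant pieces of the curly brace after the cancellations.

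The main obstacle is entirely bookkeeping: a large number of partial-Poisson-sum terms with varying upper indices must be rearranged so that the $k$-indexed coefficient simplifies to exactly the pattern $(K+(1-2\theta)/(1-\theta))$ times one sum plus $\lambda\beta$ times the shifted sum. I expect this to reduce to a single telescoping identity of the form $\sum_{k=0}^{K}\bigl[(K-k)+c\bigr]\frac{x^k}{k!}=(K+c)\sum_{k=0}^{K}\frac{x^k}{k!}-x\sum_{k=0}^{K-1}\frac{x^k}{k!}$ with $x=\lambda\beta$ and $c=(1-2\theta)/(1-\theta)$, after which the claimed expression follows by inspection.
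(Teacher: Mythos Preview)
Your proposal is correct and follows exactly the route the paper takes: the paper derives Theorems \ref{thm:LL}--\ref{thm:LS} simply by substituting the CDFs from Lemma~4 and the rate $\tilde{\lambda}_{\mathrm{LCFS}}$ from Lemma~3 into Eq.~(\ref{Eq:int}), without spelling out the algebra. Your tail-integral reformulation $\tilde{\lambda}\int_\beta^\infty[\mathrm{P}\{A>s\}-\mathrm{P}\{T>s\}]\,ds$ and the telescoping identity you isolate are precisely the bookkeeping the paper leaves implicit; the only slip is that your displayed expression for $\tilde{\lambda}_{\mathrm{LCFS}}$ omits the overall factor~$\lambda$ present in the appendix derivation (and needed for the cancellation you describe next), but you clearly use it correctly thereafter.
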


\subsection{Asymptotic Regime: $K\to\infty$}
When the buffer size is large enough, all the status packets is delivered to the receiver, and thus the throughput is $\lambda$. Next, we look into the average penalty when $K\to\infty$.
\begin{corollary}
\label{cor:1}
The average penalty of an energy harvesting status update system, with Poisson status packet arrivals and energy packet arrivals of rate $\lambda$ and $r = \lambda/\theta$, and an infinite-sized buffer, is summarized in TABLE \ref{tbl:1}.
\end{corollary}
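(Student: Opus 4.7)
The plan is to take the limit $K\to\infty$ in each of the six closed-form expressions obtained in Theorems~\ref{thm:FL}--\ref{thm:LS}. No new queuing-theoretic input is required: the corollary is a direct asymptotic evaluation of six rational--cum--exponential expressions, and it can be filed entry-by-entry into Table~\ref{tbl:1}.

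The key analytic tools are elementary: (i) stability forces $\theta=\lambda/r<1$, so $\theta^{K}\to 0$ and in fact $K^{n}\theta^{K}\to 0$ for every fixed $n$; (ii) the denominator $\theta^{-B}-\theta^{K+1}\to \theta^{-B}$, a nonzero constant; (iii) the truncated exponential series $\sum_{i=0}^{K}x^{i}/i!\to e^{x}$ for each fixed $x$; and (iv) the tail of the exponential series decays faster than any polynomial, so $K\bigl(e^{x}-\sum_{i=0}^{K}x^{i}/i!\bigr)\to 0$. With these, each formula collapses to a short closed form. For example, the bracket in Theorem~\ref{thm:FL} tends to $1/(1-\theta)$, giving $C_{\mathrm{FCFS,linear}}\to \lambda^{-1}+r^{-1}\theta^{B+1}/(1-\theta)$, and the LCFS linear case simplifies even more cleanly to $\lambda^{-1}+r^{-1}\theta^{B+1}$ once both $\theta^{K+1}$ and $\bigl(\theta/(1+\theta)\bigr)^{K+1}$ vanish. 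The step-penalty cases (Theorems~\ref{thm:FS} and~\ref{thm:LS}) are handled by the dominated-convergence-style bound $\theta^{K+1}\sum_{i=0}^{K}(r\beta)^{i}/i!\le \theta^{K+1}e^{r\beta}\to 0$, which kills every term carrying a factor $\theta^{K+1}$, leaving the elementary limits $\sum_{i=0}^{K}(\lambda\beta)^{i}/i!\to e^{\lambda\beta}$ to produce the answer.

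The main subtlety lies in the exponential penalty cases. For Theorem~\ref{thm:FE} the ratio $\bigl(\lambda/(r-\alpha)\bigr)^{K+1}$ appears, which vanishes as $K\to\infty$ only when $\alpha<r-\lambda$; in the borderline case $\alpha=r-\lambda$ the alternative formula in Theorem~\ref{thm:FE} is linear in $K$ and the limit diverges, while for $\alpha\in(r-\lambda,\lambda)$ the first branch diverges as well. Hence the corollary must either restrict to $\alpha<r-\lambda$ (the only regime in which the infinite-buffer FCFS exponential penalty remains finite) or record divergence in the complementary regime. Theorem~\ref{thm:LE} is better behaved: it contains $\bigl(\lambda/(\lambda+r-\alpha)\bigr)^{K+1}$, whose base is strictly less than one whenever $\alpha<r$, a condition already implied by $\alpha<\lambda<r$. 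Consequently the LCFS exponential case collapses immediately to $C_{\mathrm{LCFS,exp}}\to (\lambda-\alpha)^{-1}+\lambda\theta^{B}/(r-\alpha)^{2}$.

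The main obstacle is therefore not the limiting calculations, which are routine, but the bookkeeping of parameter regimes in the FCFS exponential case: one has to delineate precisely the $\alpha$-range for which the $K\to\infty$ limit is finite, and state the result piecewise. Once that is done, each remaining entry of Table~\ref{tbl:1} follows from a direct substitution and a one-line appeal to the four limit facts above.
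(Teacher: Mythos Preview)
Your proposal is correct and follows exactly the approach the paper intends: the corollary is stated without proof as an immediate consequence of letting $K\to\infty$ in Theorems~\ref{thm:FL}--\ref{thm:LS}, and your limit computations (including the correct observation that the FCFS exponential case requires the additional restriction $\alpha<r-\lambda$, which the paper records in Table~\ref{tbl:1}) reproduce each entry of the table. There is nothing to add.
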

\begin{table}[h]
\centering
\caption{Expressions of average penalty with three penalty functions under the FCFS and the LCFS disciplines when the buffer capacity $K\to\infty$. }
\vspace*{-0.1in}
\label{tbl:1}
\begin{tabular}{|c|c|c|}
\hline
Penalty functions&FCFS&LCFS\\
\hline
$g(\Delta)=\Delta$	&$\frac{1}{\lambda} + \frac{1}{\lambda} \frac{\theta^{B+2}}{1-\theta}$	&$\frac{1}{\lambda} + \frac{1}{\lambda}\theta^{B+2}$\\
\hline
$g(\Delta) = \alpha^{-1}\left(e^{\alpha\Delta}-1\right)$	&$\frac{1}{\lambda-\alpha} + r\alpha^{-1}\left(\frac{1-\theta}{r-\alpha-\lambda} - \frac{1}{r-\alpha}\right)\theta^{B}, ~\alpha<r-\lambda$	&$\frac{1}{\lambda-\alpha} + \frac{\lambda}{(r-\alpha)^2}\theta^B$\\
\hline
$g\left(\Delta\right) = \mathbf{1}\left\{\Delta\geq\beta\right\}$	&$e^{-\lambda \beta} + \left(e^{\lambda\beta} - 1\right)e^{-r \beta}\theta^B$ &$e^{-\lambda\beta} + \lambda\beta e^{-r\beta}\theta^B$\\
\hline
\end{tabular}
\vspace*{-0.2in}
\end{table}
%\vspace{-8mm}

As mentioned before, the difference in the average penalty between the energy harvesting case and the average-power-constrained case gradually diminishes with the growth of battery capacity $B$. When the $K\to\infty$, it is observed in TABLE \ref{tbl:1} that the average penalty decays exponentially to its limit with constant rate $\theta$ as the battery capacity $B$ increases, regardless of the service discipline or the penalty function.

\section{Non-Negligible Service Time Regime}
\label{sec:non-zero}

In \cite{8406846}, the authors analyze the average AoI of the model when the buffer capacity is $1$, and obtain the average AoI under the asymptotic region where the ratio between status packet arrival rate $\lambda$ and service rate $\mu$ goes to infinity. However, it is difficult to obtain the result when the service time is not negligible.

In this section, the service time of a status packet is assumed to be an i.i.d exponential random variable with mean $\mu^{-1}$. We show that the queues evolve as a QBD process, and compute the average peak AoI by matrix geometric method\cite{neuts1994matrix}. Analysis of the results in this section will be discussed at the beginning of Section V with illustrations.

\begin{figure}[h]
\vspace*{-0.2in}
\centering
\includegraphics[width=3in]{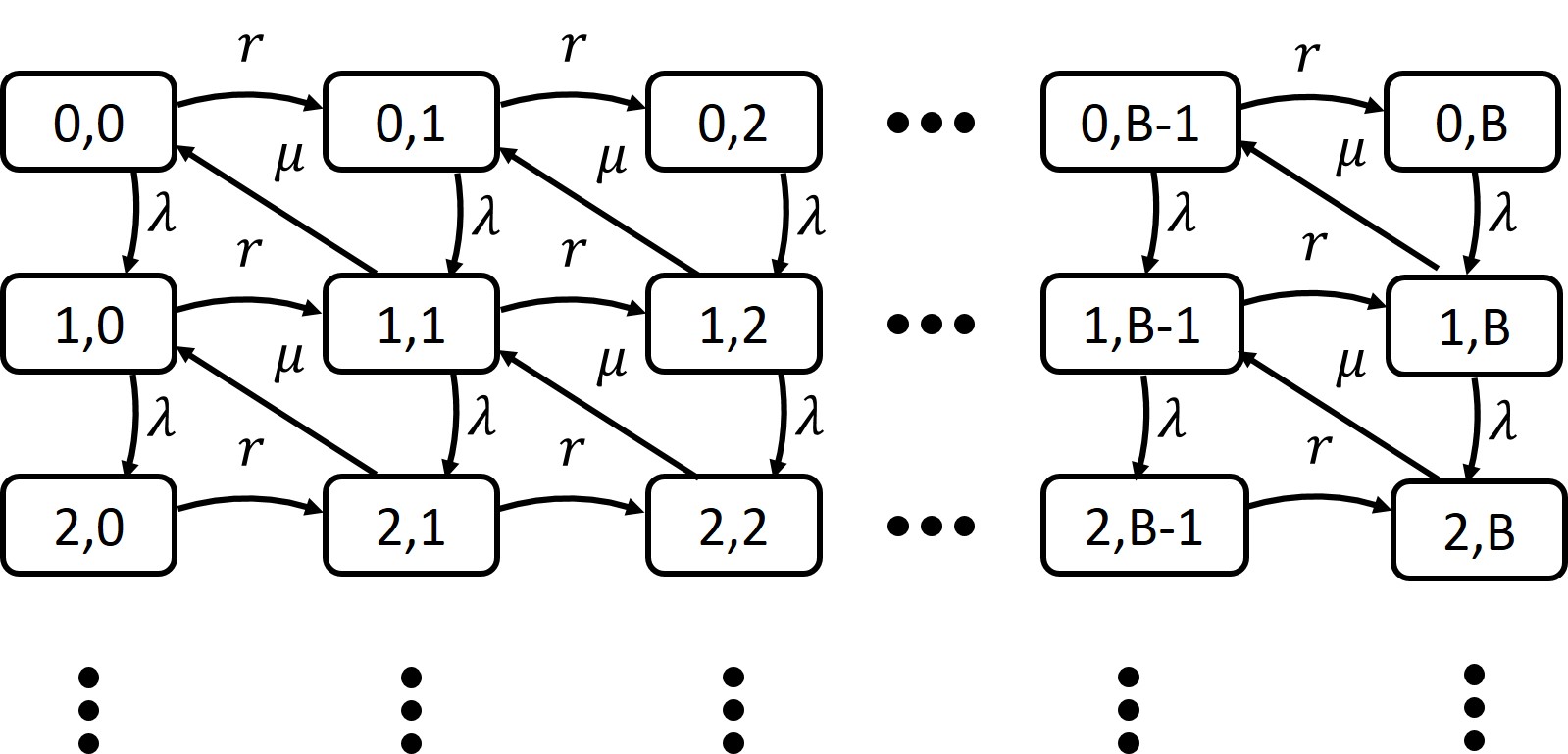}
\vspace*{-0.2in}
\caption{Two-dimensional state transition diagram for the general problem.}
\label{fig:two-dim-transit}
\vspace*{-0.1in}
\end{figure}

The states of the system are represented by tuple $\left(q_1, q_2\right)$. Note that when $q_1 >0$ and $q_2>0$, the server is serving a status packet with an energy packet until the status packet leaves the system, such that the state transits to $\left(q_1-1,q_2-1\right)$. At each state $\left(q_1, q_2\right)$, there are several possible transitions, summarized as follows:
\begin{enumerate}
\item From state $\left(q_1, q_2\right)$ to state $\left(q_1 + 1, q_2\right)$ with transition rate $\lambda$;
\item From state $\left(q_1, q_2\right)$ to state $\left(q_1, q_2+1\right)$ with transition rate $r$;
\item From state $\left(q_1, q_2\right)$ to state $\left(q_1-1, q_2-1\right)$ with transition rate $\mu$ if $\min\{q_1,q_2\}\geq1$.
\end{enumerate}

The corresponding Markov transition graph is illustrated in Fig. \ref{fig:two-dim-transit}. As the graph indicates, the two-dimensional Markov chain is a QBD process, of which the transition matrix is written as:
\begin{eqnarray*}
\mathbf{Q} = \left(
\begin{array}{ccccc}
\mathbf{\tilde{V}} & \mathbf{W}    & \mathbf{0}      & \mathbf{0}      & \cdots \\
\mathbf{U}         & \mathbf{V}    & \mathbf{W}    & \mathbf{0}      & \cdots \\
\mathbf{0}           & \mathbf{U}    & \mathbf{V}    & \mathbf{W}    & \cdots \\
\mathbf{0}           & \mathbf{0}     & \mathbf{U}   & \mathbf{V}    & \cdots \\
\vdots      & \vdots & \vdots & \ddots & \ddots \\
\end{array}
\right),
\end{eqnarray*}
in which $\mathbf{\tilde{V}}, \mathbf{U}, \mathbf{W}, \mathbf{V} \in \mathbb{R}^{\left(B+1\right)\times\left(B+1\right)}$ are
\begin{eqnarray*}
\mathbf{\tilde{V}} = -\lambda \mathbf{I} + \left(
\begin{array}{ccccc}
-r & r    & ~      & ~     & 0\\
~         & -r & r    & ~      & ~ \\
~           & ~ & \ddots    & \ddots    & ~ \\
~           & ~      & ~    & -r    & r \\
0      & ~ & ~ & ~ & 0 \\
\end{array}
\right),
\end{eqnarray*}
\begin{eqnarray*}
\mathbf{V} = -\lambda \mathbf{I} + \left(
\begin{array}{ccccc}
-r & r    & ~      & ~     & 0 \\
~         & -( r + \mu) & r    & ~      & ~ \\
~           & ~ & \ddots    & \ddots    & ~ \\
~           & ~      & ~    & -(r + \mu)    & r \\
0      & ~ & ~ & ~ & -\mu \\
\end{array}
\right),
\end{eqnarray*}
\begin{eqnarray*}
\mathbf{U} = \left(
\begin{array}{ccccc}
0~~~ & ~~~~    & ~~~~      & ~~~~     & 0~~~ \\
\mu         & 0    & ~      & ~ \\
~           & \mu & \ddots    & ~    & ~ \\
~           & ~      & \ddots    & 0    & ~ \\
0      & ~ & ~ & \mu & 0 \\
\end{array}
\right),
\end{eqnarray*}
and $\mathbf{W} = \lambda \mathbf{I}$.

Since the system is ergodic, there exists a stationary solution $\{\mathbf{p}_{i}\}$, where $\mathbf{p}_{i} = \left\{p_{i,0}, p_{i,1}, \cdots, p_{i,B}\right\}$ denotes the probability of the system being at state $(i,\cdot)$, such that $\mathbf{p}_{i}$ satisfies the following recursive relationship
\begin{equation}
\label{Eq:seq}
\mathbf{p}_{{i+1}} = \mathbf{p}_{i}\mathbf{R}, i \in \mathbb{N}.
\end{equation}
Since
\begin{equation}
\label{Eq:lim}
\left(\mathbf{p}_{0},\mathbf{p}_{1}, \cdots\right)\mathbf{Q} = \mathbf{0},
\end{equation}
by substituting (\ref{Eq:seq}) into (\ref{Eq:lim}), we have
\begin{equation}
\mathbf{R}^2\mathbf{U} + \mathbf{R}\mathbf{V} + \mathbf{W} = \mathbf{0}
\label{Eq:sqr}
\end{equation}
and
%\begin{equation}
$\mathbf{p}_{0}\left(\mathbf{\tilde{V}} + \mathbf{R}\mathbf{U}\right) = \mathbf{0}$,
%\end{equation}
along with constraint
%\begin{equation}
$\mathbf{p}_0\sum_{i=0}^{\infty}\mathbf{R}^i\mathbf{1}^\mathrm{T} = \mathbf{p}_0\left(\mathbf{I}-\mathbf{R}\right)^{-1}\mathbf{1}^\mathrm{T} = 1$,
%\end{equation}
in which $\mathbf{1} = (1,\cdots,1)\in\mathbb{R}^{B+1}$. Based on (\ref{Eq:sqr}), by iteratively computing $\mathbf{R}_{n+1} = -(\mathbf{R}^2_n\mathbf{U}+\mathbf{W})\mathbf{V}^{-1}$ with initial condition $\mathbf{R}_0 = \mathbf{0}$, $\mathbf{R}$ can be approached, and the stationary distribution of system states is obtained. The algorithm to compute the stationary distribution is summarize in Algorithm \ref{alg:1}.

\begin{algorithm}[h]
\caption{Matrix geometric method to obtain the stationary distribution of the non-negligible service time problem}
\label{alg:1}
\begin{algorithmic}[1]
\label{alg: LOKI}
\REQUIRE $\lambda, \mu, r, B,\epsilon = 10^{-8}$;
\STATE $\mathbf{R}_0 = \mathbf{0}$;
\REPEAT
\STATE $\mathbf{R}_{n+1} = -(\mathbf{R}^2_n\mathbf{U}+\mathbf{W})\mathbf{V}^{-1}$;
\UNTIL $\max_{i,j}|(\mathbf{R}_{n+1}-\mathbf{R}_{n})_{ij}|<\epsilon$;
\STATE Find the eigenvector $\mathbf{p}_0$ of $(\mathbf{\tilde{V}} + \mathbf{R}\mathbf{U})$ with eigenvalue $0$ and  satisfying $\mathbf{p}_0\left(\mathbf{I}-\mathbf{R}\right)^{-1}\mathbf{1}^\mathrm{T} = 1$.
\end{algorithmic}
\end{algorithm}

According to Little's law, the mean sojourn time of a status packet is
$$\mathbb{E}\left[T\right]  =  \frac{\bar{q_1}}{\lambda}= \frac{1}{\lambda}\sum_{i=1}^{+\infty}i\mathbf{p}_{0}\mathbf{R}^i\mathbf{1}^\mathrm{T} =  \frac{1}{\lambda}\mathbf{p}_{0}(\mathbf{I}-\mathbf{R})^{-2}\mathbf{R}\mathbf{1}^\mathrm{T}. $$With the stationary distribution $\{\mathbf{p}_0\mathbf{R}^i\}_{i\geq0}$, the expression for average peak AoI is obtained by Eq. (\ref{Eq:page}), and summarized in the following theorem:
\begin{theorem}
Under the FCFS discipline, with status packet arrivals and energy packet arrivals being Poisson processes of rate $\lambda=r\theta$ and $r$, and the service time being i.i.d. random variable following exponential distribution with mean $\mu^{-1}$, the average peak AoI is
\begin{eqnarray}
\mathbb{E}\left[A\right] = \frac{1}{\lambda} + \frac{1}{\lambda}\mathbf{p}_{0}(\mathbf{I}-\mathbf{R})^{-2}\mathbf{R}\mathbf{1}^\mathrm{T},
\end{eqnarray}
where $\mathbf{p}_{0}$ and $\mathbf{R}$ are computed by Algorithm \ref{alg:1}.
\end{theorem}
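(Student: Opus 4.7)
The plan is to derive the announced formula directly from Eq.~(\ref{Eq:page}), namely $\mathbb{E}[A]=\mathbb{E}[X]+\mathbb{E}[T]$, by evaluating each summand separately using the QBD structure established immediately above the theorem.

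First, I would show that $\mathbb{E}[X]=1/\lambda$. Under FCFS with the (infinite) data buffer operating in the stable regime, packets depart strictly in arrival order, so every delivered packet is more recent than all previously delivered ones and is therefore a valid update. Consequently the sequence of valid-update arrival epochs coincides with the original Poisson($\lambda$) status arrival process, and its inter-arrival increments $X_i$ are i.i.d.\ exponential of rate $\lambda$. Taking $i\to\infty$ in the definition of $X$ then gives $\mathbb{E}[X]=1/\lambda$.

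Second, for $\mathbb{E}[T]$ I would invoke the Little's-law calculation that precedes the theorem. With the matrix-geometric stationary distribution $\mathbf{p}_i=\mathbf{p}_0\mathbf{R}^i$ produced by Algorithm~\ref{alg:1}, the mean data-queue length is
$$\bar{q_1}=\sum_{i=1}^{\infty} i\,\mathbf{p}_0\mathbf{R}^i\mathbf{1}^{\mathrm T}.$$
Lifting the scalar identity $\sum_{i\ge 1}ix^i=x/(1-x)^2$ to the matrix setting yields $\sum_{i\ge 1}i\mathbf{R}^i=(\mathbf{I}-\mathbf{R})^{-2}\mathbf{R}$, so by Little's law $\mathbb{E}[T]=\bar{q_1}/\lambda=\frac{1}{\lambda}\mathbf{p}_0(\mathbf{I}-\mathbf{R})^{-2}\mathbf{R}\mathbf{1}^{\mathrm T}$. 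Summing the two pieces via Eq.~(\ref{Eq:page}) gives the stated formula.

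The only genuine obstacle is justifying convergence and termwise manipulation of the matrix power series, i.e., verifying $\rho(\mathbf{R})<1$ so that $\mathbf{I}-\mathbf{R}$ is invertible and $\sum_{i\ge 0}\mathbf{R}^i$ may be differentiated termwise. This is precisely the positive-recurrence condition for the QBD, which is equivalent to stability of both queues (ensured here by $\lambda<r$ together with $\lambda<\mu$) and is the standing assumption under which Algorithm~\ref{alg:1} is well defined. Once that is in place, the remainder of the argument is the routine algebraic assembly sketched above.
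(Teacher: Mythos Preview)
Your proposal is correct and follows essentially the same route as the paper: the paper also computes $\mathbb{E}[T]$ via Little's law applied to the matrix-geometric stationary distribution, obtaining $\mathbb{E}[T]=\frac{1}{\lambda}\mathbf{p}_0(\mathbf{I}-\mathbf{R})^{-2}\mathbf{R}\mathbf{1}^{\mathrm T}$, and then invokes Eq.~(\ref{Eq:page}) with $\mathbb{E}[X]=1/\lambda$ to conclude. You have supplied more justification than the paper does (the argument for $\mathbb{E}[X]=1/\lambda$ and the spectral-radius condition for summability), but the structure of the proof is the same.
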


\section{Numerical Analysis}
\label{sec:anal}

Fig. \ref{fig:B4} illustrates the average peak AoI versus the ratio of status packet arrival rate to service rate, and the ratio of energy arrival rate to service rate, respectively. The service rate $\mu$ is set to $1\mathrm{s}^{-1}$, and the battery capacity $B$ is set to $5$ units. As the figure shows, the average peak age first decreases then rises with the growth of $\frac{\lambda}{\mu}$. The reason is intuitive: When the status packet arrival rate is small, the lack of status information generation leads to a low update frequency, while when the status packet arrival rate is large, long queuing delay becomes the dominant factor that aggravates the freshness of status information. Therefore, there exists an optimal status update frequency that achieves the minimum average peak AoI, which infers the optimal sensing rate for remote status update. It is also observed from the figure that the energy arrival rate $r$ should be larger than the status packet arrival rate $\lambda$. Otherwise, the status packet queue is not stable.

\begin{figure}[t]
\vspace*{-0.2in}
\centering
\includegraphics[width=3.3in]{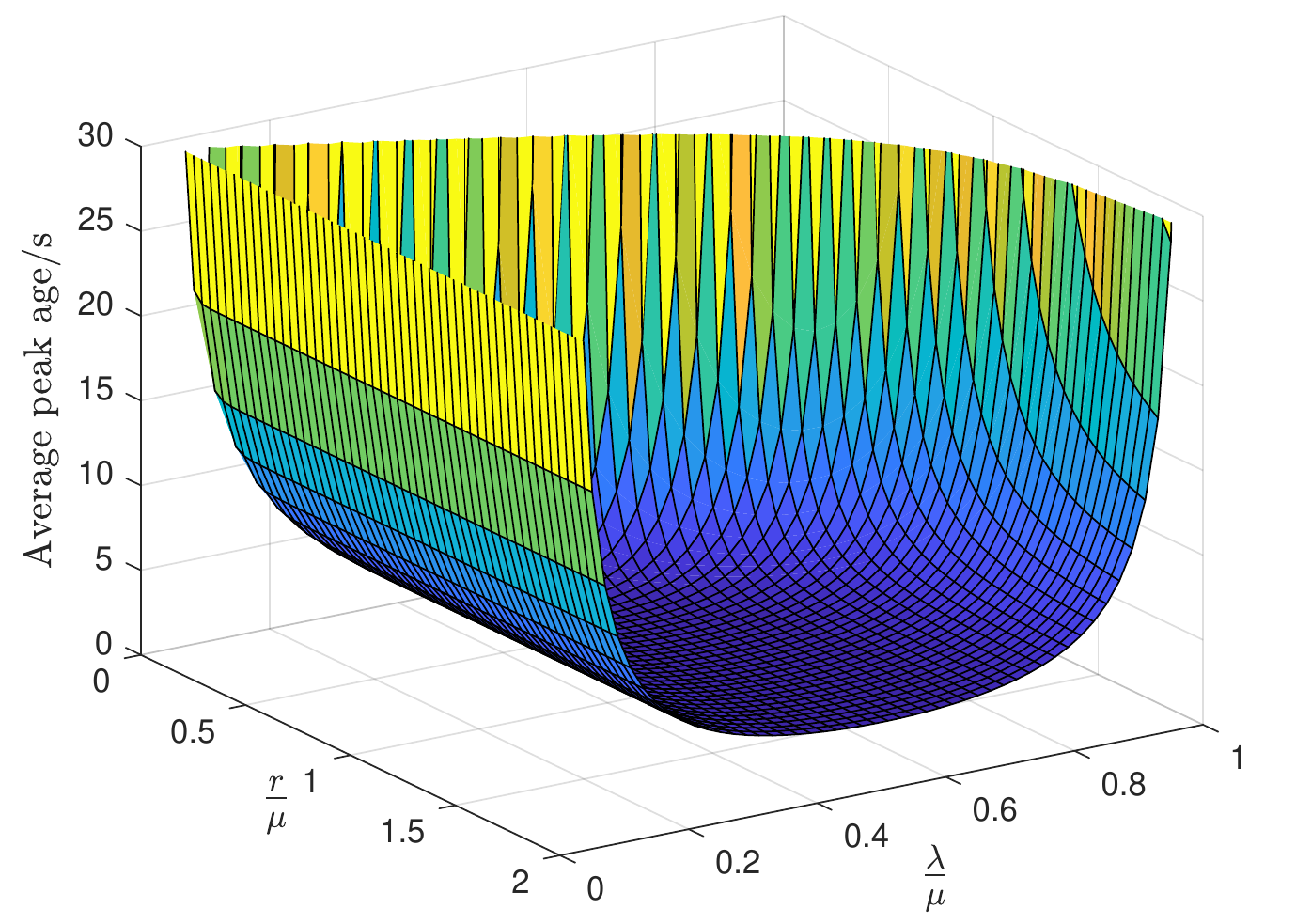}
\vspace*{-0.1in}
\caption{The average peak AoI given that $B=5$ and $\mu = 1\mathrm{s}^{-1}$. }
\label{fig:B4}
\vspace*{-0.2in}
\end{figure}

Next, we focus on the negligible-service-time regime to get insights into the problem.

\begin{figure}[h]
%\vspace*{-0.1in}
\centering
\includegraphics[width=3in]{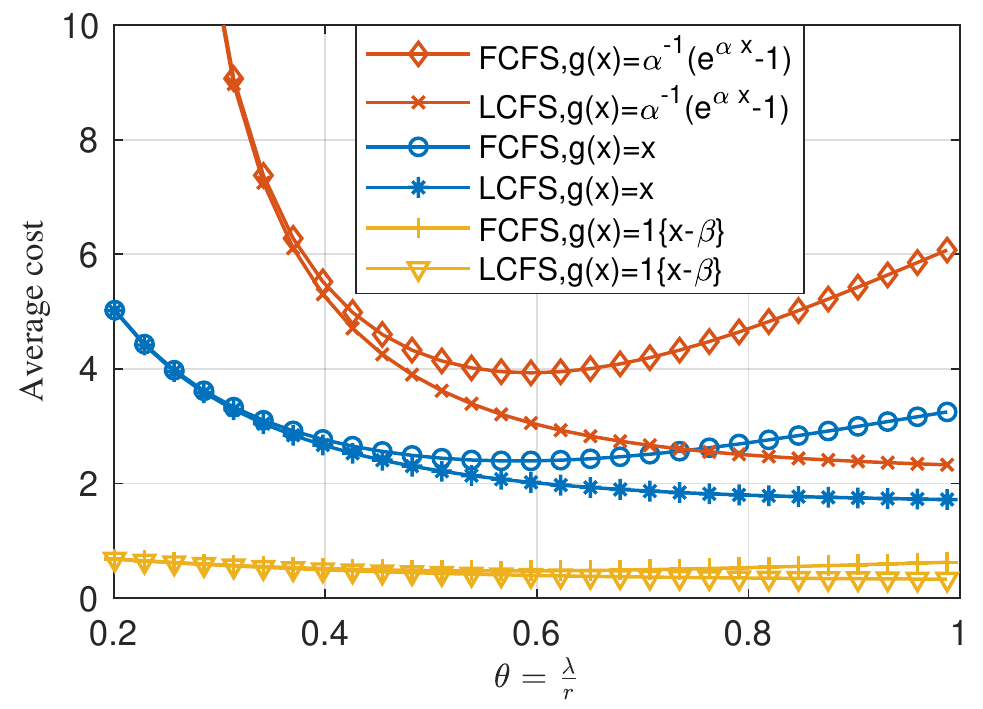}
\vspace*{-0.2in}
\caption{Comparison of average penalty in a status update system with battery size $B=1$, buffer capacity $K=5$, exponent $\alpha = 0.2$, threshold $\beta = 2$, and energy arrival rate $r=1\mathrm{s}^{-1}$. }
\label{fig:all}
\vspace*{-0.2in}
\end{figure}

Fig. \ref{fig:all} plots the three types of average penalty under the FCFS and LCFS disciplines. The average penalty under exponential penalty function is more sensitive to the change in data-to-energy ratio $\theta$, and the average penalty of shifted unit step function, \textit{i.e.}, the violation probability, is the least sensitive. The overall characteristics of the three penalty functions have several similarities:
\begin{enumerate}
\item Under the FCFS discipline, the average penalty first decreases then increases with the growth of $\theta$. This result is consistent with the one in average peak AoI when service time cannot be neglected.
\item Under the LCFS discipline, a larger $\theta$ always gives a lower penalty, owing to a larger update frequency.
\item The LCFS discipline always outperforms the FCFS discipline. This result is consistent with the work in \cite{7541763}.
\end{enumerate}

\begin{figure}[h]
\vspace*{-0.2in}
  \centering
  \subfigure[Average AoI]{
    \label{fig:cmpK}
    \includegraphics[width=2.8in]{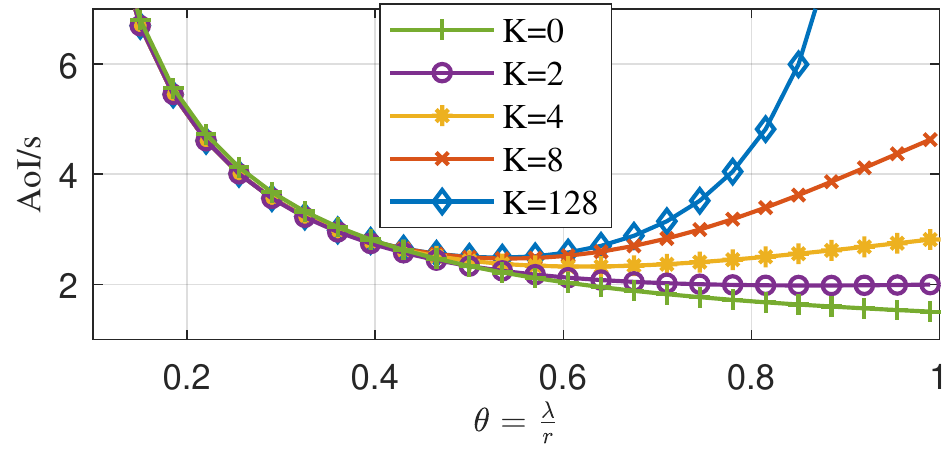}}
    \subfigure[The regime where $K=1$ gives the highest AoI.]{
   \label{fig:cmpKl}
   \includegraphics[width=2.8in]{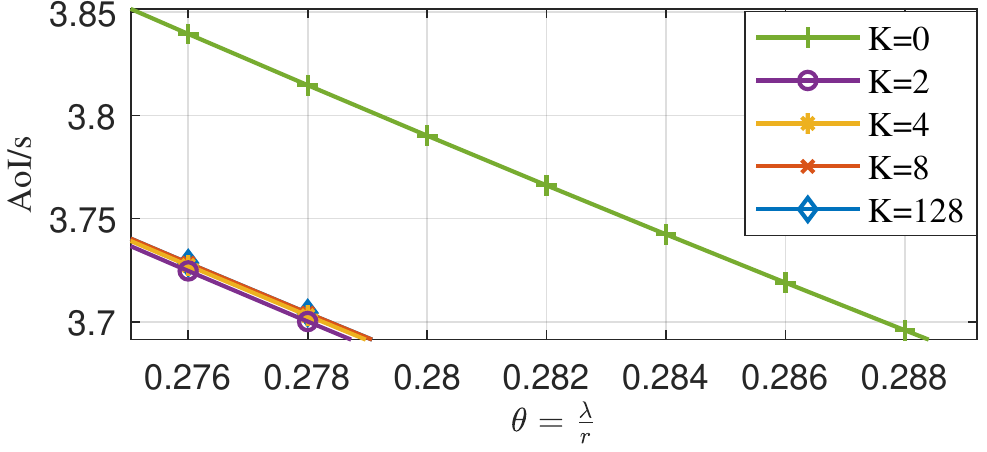}}
  %\hspace{1in}
  \subfigure[Average penalty with exponential penalty function $g\left(\Delta\right) = \alpha^{-1}\left(e^{\alpha\Delta}-1\right)$ and exponent $\alpha = 0.2$.]{
    \label{fig:cmpK_exp}
    \includegraphics[width=2.8in]{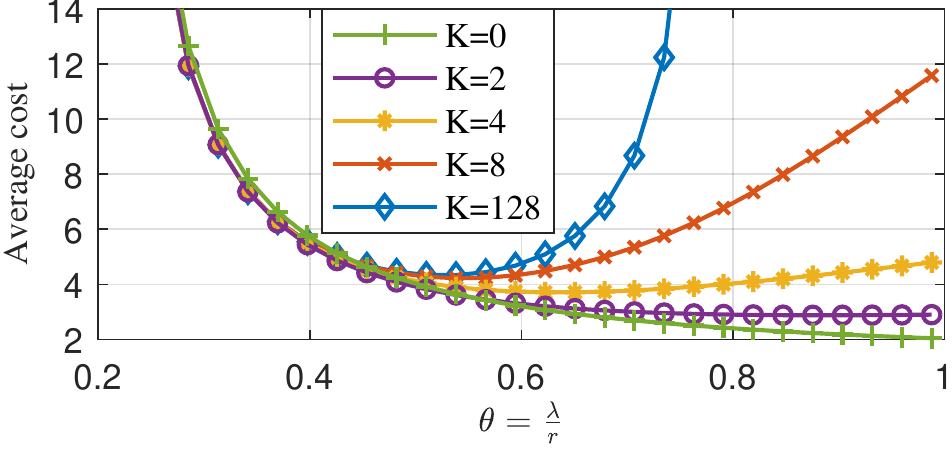}}
\subfigure[Violation probability under threshold $\beta=5$.]{
    \label{fig:cmpK_s}
    \includegraphics[width=2.8in]{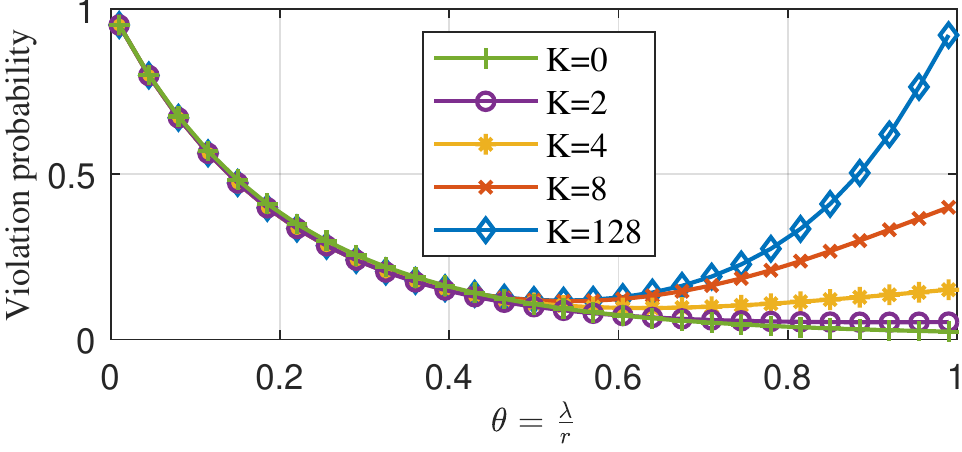}}
  \caption{Average penalty under FCFS discipline with energy arrival rate $r=1\mathrm{s}^{-1}$ and battery capacity $B=1$. }
  \label{fig:K}
  \vspace*{-0.3in}
\end{figure}

\subsection{Impact from Buffer Capacity $K$ under FCFS}
Fig. \ref{fig:cmpK} depicts the average AoI under different buffer capacities and ratio $\theta$ with constant energy arrival rate. It is shown that the average AoI is significantly lower, if newly arrived status packets are dropped when the status packet arrival rate is large. However, as observed in Fig. \ref{fig:cmpKl}, blocking more status packets cannot always reduce the average AoI when status packet arrival rate is small. The optimal buffer capacity under given battery capacity and status packet arrival rate can be found by numerical methods. Illustrations of average penalties under exponential penalty function and violation probability are shown in Fig. \ref{fig:cmpK_exp} and Fig. \ref{fig:cmpK_s}.

Moreover, as shown in Fig. \ref{fig:cmpK_exp}, the average penalty under the exponential penalty function is extremely sensitive to status packet arrival rate when the buffer capacity is large. The sensitivity of exponential penalty function is previously observed in \cite{8006543} for $K=\infty$ and in the packet management problem in \cite{8000687}. Especially, when the buffer capacity goes to infinity, the average penalty exists only if $\lambda < r-\alpha$. The reason is the under the exponential penalty function, the penalty rises sharply when AoI is large. However, the upper bound condition on status packet arrival rate is not necessary under the LCFS discipline.

\begin{figure}[h]
  \centering
    \vspace*{-0.1in}
  \subfigure[Average AoI]{
    \label{fig:cmpB}
    \includegraphics[width=2.8in]{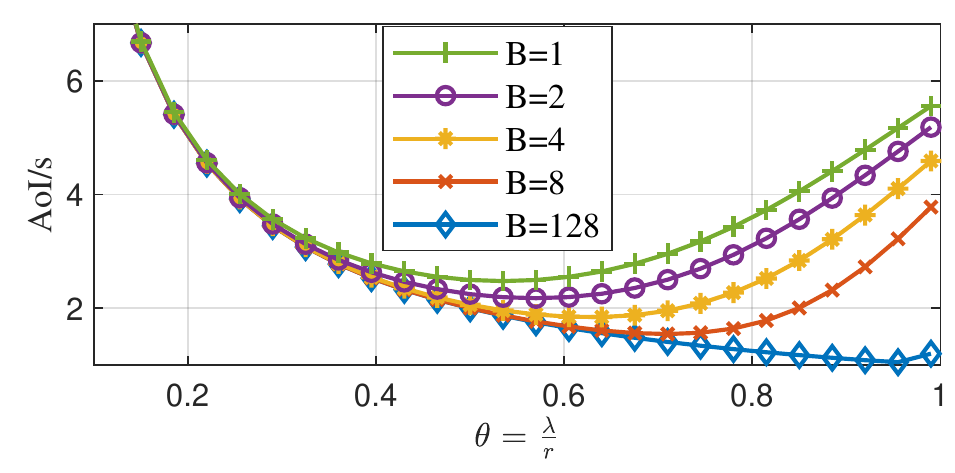}}
  %\hspace{1in}
  \subfigure[Average penalty with exponential penalty function $g\left(\Delta\right) = \alpha^{-1}\left(e^{\alpha\Delta}-1\right)$ and exponent $\alpha = 0.2$.]{
    \label{fig:cmpB_exp}
    \includegraphics[width=2.8in]{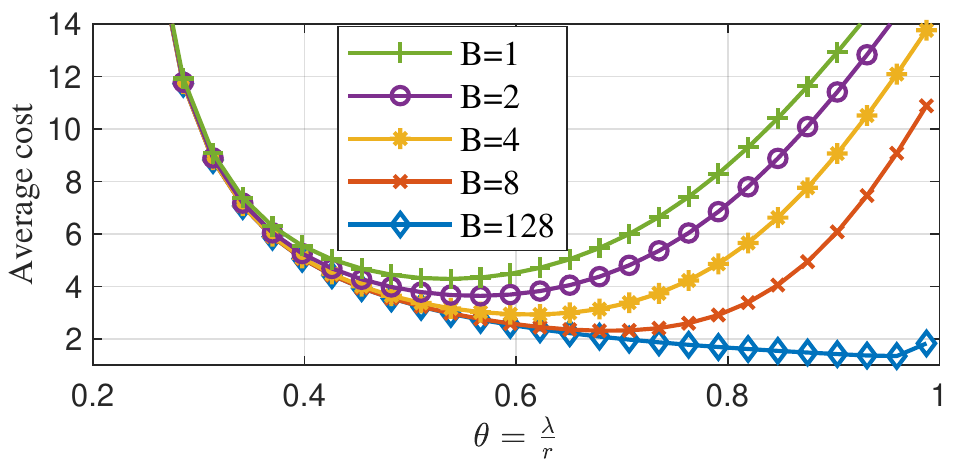}}
\subfigure[Violation probability under threshold $\beta = 5$.]{
    \label{fig:cmpB_s}
    \includegraphics[width=2.8in]{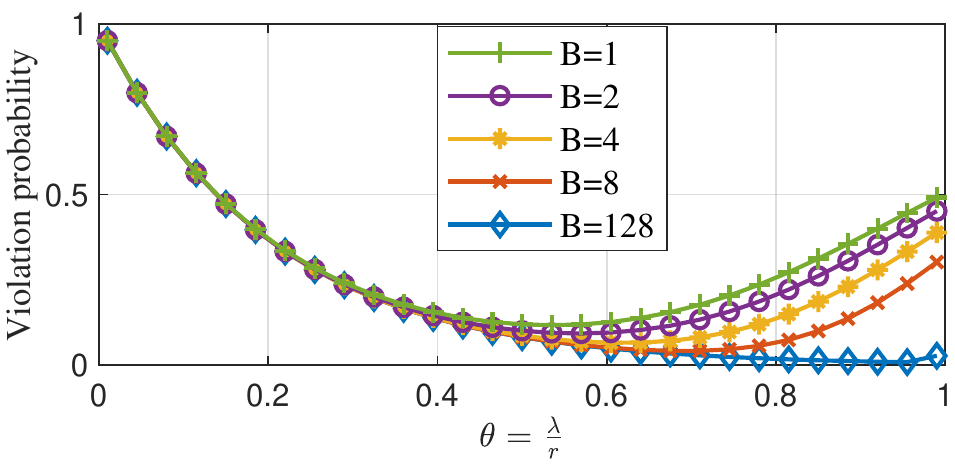}}
  \caption{Average penalty under FCFS discipline with energy arrival rate $r=1\mathrm{s}^{-1}$ and buffer capacity $K=10$. }
  \label{fig:B} %
    \vspace*{-0.3in}
\end{figure}

\subsection{Impact from Battery Capacity $B$ under FCFS}
Fig. \ref{fig:B} compares average penalty under different battery capacities and ratio $\theta$ with constant energy arrival rate $r=1\mathrm{s}^{-1}$. As shown in the figure, when $\theta$ is small, the increase in battery capacity does not noticeably reduce the average penalty. As the status packet arrival rate grows, different from the average power constrained case, the average penalty first drops then increases. The optimal status packet arrival rate can be found by bisection method.
\begin{figure}[h]
%\vspace*{0.2in}
\centering
\includegraphics[width=2.8in]{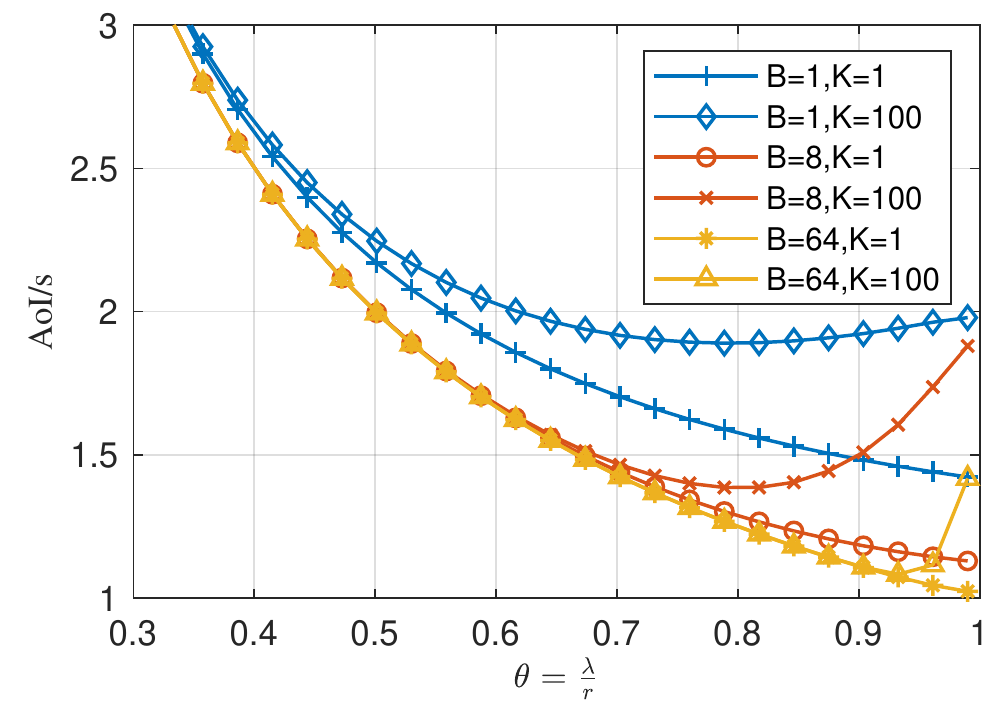}
\vspace*{-0.2in}
\caption{Average AoI of a LCFS system under different buffer capacitys and ratio $\theta$ with energy arrival rate $r=1\mathrm{s}^{-1}$. }
\label{fig:L}
\vspace*{-0.2in}
%\vspace*{0.2in}
\end{figure}
\subsection{Impact from the LCFS Discipline}
According to Theorem 4--6, the average penalty under LCFS discipline also decays as battery capacity increases. Fig. \ref{fig:L} plots the average AoI under LCFS discipline, with data buffer $K=1$ and $K=100$. When $K=100$, little status packets is dropped, and the average AoI first drops and then increases as the status packet arrival rate grows, while with $K=1$, \textit{i.e.}, only the latest status packet is kept in the queue at each time, a higher status update frequency always gives a lower average age. Additionally, it is noticed from the figure and can be easily proved that LCFS discipline with $K=1$ leads to the lowest average AoI if the server is working-conserving. The same result also holds in the cases with exponential penalty function and shifted unit step function.

\section{Conclusions and Future Work}
\label{sec:con}
In this paper, we investigate the non-linear AoI for a status update system powered by renewable energy sources. With packet generation and energy arrival both being a Poisson process and service time following exponential distribution, a method to compute the average peak AoI is proposed. To gain more insights into the problem, a special case in which the service time is small enough is analyzed. The closed-form expressions of average penalty under linear penalty function, exponential penalty function, and shifted unit step function for both FCFS and LCFS systems are obtained. Results show that under the FCFS discipline, the average penalty under the exponential penalty function is extremely sensitive to the status information generation frequency, especially when the buffer capacity or the parameter $\alpha$ in exponent is large. As battery capacity increases, the difference in the average penalty between the energy-harvesting case and the average-power-constrained case exponentially decays with rate equal to the ratio of status information generation frequency to energy arrival rate. For FCFS systems, there exists a unique minimum average penalty that can be achieved by a proper status information generation frequency, while in a LCFS system with unit buffer capacity, the minimum penalty is achieved, and a larger status information generation frequency always gives a smaller average penalty. It is also noticed that blocking more status packets can reduce the average penalty when the status generation frequency is large. The performance of the LCFS discipline and finite buffer capacity in a system with non-negligible service time needs to be investigated in future work.

\begin{appendices}
\section{Proof of Lemma 1}
\label{proof:l1}
Under FCFS, a data  packet is a valid update \textit{if and only if} the data buffer is not full upon its arrival. Since the queuing system is equivalent to an M/M/1 queue, we have $$\lim_{n\to\infty}\mathrm{P}\left\{S_{n}=j\right\} = \frac{\left(1-\theta\right)\theta^{j+B}}{1-\theta^{K+B+1}}. $$According to the {\textit {PASTA}} (Poisson Arrivals See Time Average) property, we have
\begin{equation}
\tilde{\lambda}_\mathrm{FCFS} = \lambda\sum_{j=-B}^{K-1}\lim_{n\to\infty}\mathrm{P}\left\{S_{n}=j\right\} = \lambda \frac{1-\theta^{K+B}}{1-\theta^{K+B+1}}.
\end{equation}

\section{Proof of Lemma 2}
\label{proof:l2}
By the total probability formula, we have that
$$\Prob{A_n\leq a} = \Prob{X_n+T_n\leq a} = \int_0^\infty\Prob{T_n\leq a-x\big|X_n = x}f_{X_n}(x)\mathrm{d}x, $$
and that
$$\Prob{T_n\leq t} = \int_0^\infty\Prob{T_n\leq t\big|X_n = x}f_{X_n}(x)\mathrm{d}x. $$
Therefore, we first obtain the CDF of inter-arrival time $X$ of valid updates, and the conditional probability distribution of sojourn time $T$ given inter-arrival time $X$.

Denoting the system state right before the arrival of the $n$-th valid update as $S_n^-$, since valid updates are the status packets that enter the data buffer, we get $S_n^- \leq K-1$. Applying the total probability formula again, we have
\begin{eqnarray}
\mathrm{P}\left\{T_n\leq t|X_n\right\} &=& \sum_{i=-B}^{K-1}\mathrm{P}\left\{S_n^-=i|X_n\right\}\mathrm{P}\left\{T_n\leq t|S_n^-=i,X_n\right\}\notag\\
& \overset{(a)}{=}&  \sum_{i=-B}^{K-1}\mathrm{P}\left\{S_n^-=i|X_n\right\}\mathrm{P}\left\{T_n\leq t|S_n^-=i\right\},
\label{Eq:cond}
\end{eqnarray}
where (a) uses the fact that the sojourn time $T_n$ of the $n$-th valid update and the inter-arrival time $X_n$ between the $(n-1)$-th and the $n$-th valid update are conditionally independent given $S_n^-$.

The second term on the right-hand side of Eq. (\ref{Eq:cond}) is expressed as Eq. (\ref{Eq:ti}): when the battery is empty and there are already $i$ status packets waiting to be served, \textit{i.e.}, $S_n^-\geq i$, the new status packets will not be served until the arrival of $i+1$ energy packets; when the battery is not empty ($S_n^-<0$), the new status packets will be delivered instantly.
\begin{eqnarray}
\mathrm{P}\left\{T_n\leq t|S_n^-=i\right\} = \left\{
\begin{aligned}
&\sum_{j=i+1}^{+\infty}\frac{\left(rt\right)^j}{j!}e^{-rt}, &\mathrm{~if~} i\geq 0;\\
&1, &\mathrm{~if~} i < 0.
\end{aligned}
\right.
\label{Eq:ti}
\end{eqnarray}

Using the total probability formula in the first term on the right-hand side of Eq. (\ref{Eq:cond}) , we have
\begin{gather}
\mathrm{P}\left\{S_n^-=i|X_n\right\} =  \sum_{j=\max\left\{i-1,-B\right\}}^{K-1}\mathrm{P}\left\{S_{n-1}^-=j|X_n\right\}\mathrm{P}\left\{S_n^-=i|X_n,S_{n-1}^-=j\right\},
 \label{Eq:tp}
\end{gather}
where uses the condition that $S_{n-1}^- + 1 \geq S_{n}^-$ since there are only one valid update between $S_{n-1}^-$ and $S_{n}^-$.

To get Eq. (\ref{Eq:tp}), we first obtain the CDFs of inter-arrival time $X_n$ and system state $S_n^-$ before arrival. According to the {\textit {PASTA}} property, $\mathrm{P}\left\{S_{n}^-=i\right\}$ approaches the stationary probability of the system being at state $i$  when $n$ goes to infinity given the condition that the state is not $K$. Since the queuing system is equivalent to an M/M/1 queue, which gives $\lim_{n\to\infty}\mathrm{P}\left\{S_{n}=i\right\} = \frac{\left(1-\theta\right)\theta^{i+B}}{1-\theta^{K+B+1}}$ the probability of state being $i$ is
\begin{eqnarray}
\lim_{n\to\infty}\mathrm{P}\left\{S_{n}^-=i\right\}=\frac{\lim_{n\to\infty}\mathrm{P}\left\{S_{n}=i\right\}}{1-\lim_{n\to\infty}\mathrm{P}\left\{S_{n}=K\right\}}
= \frac{1-\theta}{1-\theta^{K+B}}\theta^{i+B}.
\label{Eq:MM1}
\end{eqnarray}
Next, we apply the total probability formula to obtain the CDF of inter-arrival time $X_n$:
\begin{eqnarray}
\label{Eq:x00}
\mathrm{P}\left\{X_n\leq x\right\} &=& \mathrm{P}\left\{X_n\leq x|S_{n-1}^-=K-1\right\}\mathrm{P}\left\{S_{n-1}^-=K-1\right\} \notag\\
&~&+ \sum_{j = -B}^{K-2}\mathrm{P}\left\{X_n\leq x|S_{n-1}^-=j\right\}\mathrm{P}\left\{S_{n-1}^-=j\right\}.
\end{eqnarray}
When $S_{n-1}^- < K-1$, the data buffer cannot be full at the arrival of the next status packet. Therefore, any status packet arrives right after the $(n-1)$-th valid update is a valid update, which gives
\begin{equation}
\label{Eq:xk1}
\mathrm{P}\left\{X_n\leq x|S_{n-1}^-=j\right\} = 1-e^{-\lambda x}, j \leq K-2;
\end{equation}
When $S_{n-1}^- = K-1$, there are at least one energy packet arrival before the next status packet entering the buffer. Thus, we have
\begin{eqnarray}
\label{Eq:xk2}
\mathrm{P}\left\{X_n\leq x|S_{n-1}^-=K-1\right\} %&=& \int_0^x re^{-rx'}\left(1-e^{-\lambda\left(x-x'\right)}\right)\mathrm{d}x'\notag \\
&=& 1-\frac{1}{1-\theta}e^{-\lambda x} + \frac{\theta}{1-\theta}e^{-rx}.
\end{eqnarray}
Substituting Eq. (\ref{Eq:MM1}), (\ref{Eq:xk1}) and (\ref{Eq:xk2}) into the right-hand side of Eq. (\ref{Eq:x00}), we have
\begin{equation}
\label{Eq:x}
\mathrm{P}\left\{X_n\leq x\right\}=1+e^{-rx}\frac{\theta^{K+B}}{1-\theta^{K+B}}-e^{-\lambda x}\frac{1}{1-\theta^{K+B}}.
\end{equation}

Next we obtain the first term on the right-hand side of Eq. (\ref{Eq:tp}) by Bayes' theorem:
\begin{eqnarray*}
\mathrm{P}\left\{S_{n-1}^-=j|X_n=x\right\} = \frac{f_{X_n|S_{n-1}^-=j}(x)\mathrm{P}\left\{S_{n-1}^-=j\right\}}{f_{X_n}(x)}.
\end{eqnarray*}
With Eq. (\ref{Eq:MM1}), (\ref{Eq:xk1})--(\ref{Eq:x}), the above equation can be expressed as:
\begin{eqnarray}
\label{Eq:sx}
\mathrm{P}\left\{S_{n-1}^-=j|X_n=x\right\} =\left\{
\begin{aligned}
&\frac{\left(1-\theta\right)\theta^{j+B+1} e^{-\lambda x}}{-e^{-rx}\theta^{K+B}+ \theta e^{-\lambda x}}, &\mathrm{~if~} j < K-1;\\
&\frac{-e^{-rx}\theta^{K+B}+e^{-\lambda x}\theta^{K+B}}{-e^{-rx}\theta^{K+B}+\theta e^{-\lambda x}},& \mathrm{~if~} j = K-1.
\end{aligned}
\right.
\end{eqnarray}

Now we obtain the second term on the right-hand side of Eq. (\ref{Eq:tp}). If $S_{n-1}^-\leq K-2$ and  $S_n^->-B$, then there are $(S_{n-1}^- - S_{n}^- + 1)$ energy packets arrived between the arrivals of the $(n-1)$-th and the $n$-th valid update, which gives
\begin{equation}
\mathrm{P}\left\{S_n^-=i|X_n=x,S_{n-1}^-=j\right\} = \frac{\left(rx\right)^{j+1-i}}{\left(j+1-i\right)!}e^{-rx}, \mathrm{~if~}j\leq K-2, i>-B.
\label{Eq:psn1}
\end{equation}
If $S_{n-1}^-\leq K-2$ and  $S_n^-=-B$, then there are at least $(S_{n-1}^- + B + 1)$ energy packets arrived between the arrivals of the $(n-1)$-th and the $n$-th valid update:
\begin{align}
\mathrm{P}\left\{S_n^-=i|X_n=x,S_{n-1}^-=j\right\} = \sum_{m=j+B+1}^{+\infty}\frac{\left(rx\right)^{m}}{m!}e^{-rx}, \mathrm{~if~}j\leq K-2, i=-B.
\label{Eq:psn2}
\end{align}
If $S_{n-1}^-= K-1$, by the definition, we have
\begin{eqnarray}
\label{Eq:sxs1}
\mathrm{P}\left\{S_n^-=i|X_n=x,S_{n-1}^-=K-1\right\} = \frac{f_{S_n^-,X_n|S_{n-1}^-=K-1}(i,x)}{f_{X_n|S_{n-1}^-=K-1}(x)}.
\end{eqnarray}
Further, if $S_n^->-B$, then what happens between the arrivals of the $(n-1)$-th and the $n$-th valid update is that the first energy packet arrives at $t$ seconds after the arrival of the $(n-1)$-th valid update, and that there are $(K-1-S_n^-)$ energy packets and no status packet arrived between the arrivals of the first energy packet and the $n$-th valid update. Thus, we have
\begin{eqnarray}
\label{Eq:sxs2}
f_{S_n^-,X_n|S_{n-1}^-=K-1}(i,x) = \int_0^xre^{-r(x-t)}\frac{(rt)^{K-i-1}}{(K-i-1)!}e^{-rt}\lambda e^{-\lambda t}\,\mathrm{d}t, \mathrm{~if~}S_n^->-B.
\end{eqnarray}
If $S_n^-=-B$, then between the arrivals of the $(n-1)$-th and the $n$-th valid update, the first energy packet arrives at $t$ seconds after the arrival of the $(n-1)$-th valid update, and there are at least $K+B-1$ energy packets and no status packet arrived between the arrivals of the first energy packet and the $n$-th valid update:
\begin{eqnarray}
\label{Eq:sxs3}
f_{S_n^-,X_n|S_{n-1}^-=K-1}\left(-B, x\right) = \int_0^xre^{-r(x-t)}\sum_{l=K+B-1}^{+\infty}\frac{(rt)^l}{l!}e^{-rt}\lambda e^{-\lambda t}\,\mathrm{d}t.
\end{eqnarray}
Substituting Eq. (\ref{Eq:xk2}), (\ref{Eq:sxs2}) and (\ref{Eq:sxs3}) into Eq. (\ref{Eq:sxs1}), we get
\begin{eqnarray}
\label{Eq:psn3}
&~&\mathrm{P}\left\{S_n^-=i|X_n=x,S_{n-1}^-=K-1\right\}\notag\\
&=& \left\{
\begin{aligned}
&\frac{1-\theta}{\theta^{K-i}}\frac{e^{-rx}}{e^{-\lambda x} - e^{-rx}}\sum_{m=K-i}^{+\infty}\frac{(\lambda x)^m}{m!}e^{-\lambda x}, &\mathrm{~if~}S_n^->-B;\\
&\frac{e^{-r(1+\theta)x}}{e^{-\lambda x} - e^{-rx}}\left(\sum_{i=K+B}^{+\infty}\frac{(rx)^i}{i!}-\theta^{-K-B+1}\sum_{i=K+B}^{+\infty}\frac{(\lambda x)^i}{i!}\right), &\mathrm{~if~}S_n^-=-B.
\end{aligned}
\right.
\end{eqnarray}

Substituting Eq. (\ref{Eq:sx}), (\ref{Eq:psn1}), (\ref{Eq:psn2}) and (\ref{Eq:psn3}) into Eq. (\ref{Eq:tp}), we get
\begin{eqnarray}
\mathrm{P}\left\{S_n^-=i|X_n=x\right\} = \left\{
\begin{aligned}
&\frac{(1-\theta)\theta^{B+i}e^{-rx}}{-e^{-rx}\theta^{K+B}+\theta e^{-\lambda x}}, &\mathrm{~if~}i  > -B;\\
&\frac{(e^{-\lambda x}-e^{-rx})\theta}{-e^{-rx}\theta^{K+B}+\theta e^{-\lambda x}}, &\mathrm{~if~}i  = -B.
\end{aligned}
\right.
\label{Eq:ix}
\end{eqnarray}

Combining Eq. (\ref{Eq:ix}) with Eq. (\ref{Eq:ti}), Eq. (\ref{Eq:cond}) becomes
\begin{eqnarray*}
\label{Eq:tx}
\mathrm{P}\left\{T_n\leq t|X_n=x\right\} = 1 &-& \frac{\theta^{B}e^{-rx}}{-e^{-rx}\theta^{K+B}+\theta e^{-\lambda x}}\sum_{n=0}^{K-1}\frac{\left(\lambda t\right)^n}{n!}e^{-rt}\notag\\
 &+& \frac{\theta^{K+B}e^{-rx}}{-e^{-rx}\theta^{K+B}+\theta e^{-\lambda x}}\sum_{n=0}^{K-1}\frac{\left(rt\right)^n}{n!}e^{-rt}.
\end{eqnarray*}

Together with Eq. (\ref{Eq:x}), by the total probability formula, the lemma is proved.
%Together with Eq. (\ref{Eq:x}), by the total probability formula, the lemma is proved:
%\begin{eqnarray*}
%\mathrm{P}\left\{T\leq t\right\}&=&\int_0^{+\infty}f_{X_n}(x)\mathrm{P}\left\{T_n\leq t|X_n=x\right\}\,\mathrm{d}x \\
%& = & 1 - \frac{1}{\theta^{-B}-\theta^{K}}\sum_{n=0}^{K}\frac{\left(\lambda t\right)^n}{n!}e^{-rt} + \frac{\theta^{K}}{\theta^{-B}-\theta^{K}}\sum_{n=0}^{K}\frac{\left(rt\right)^n}{n!}e^{-rt}.
%\end{eqnarray*}
%\begin{eqnarray*}
%\mathrm{P}\left\{T+X\leq a\right\}&=&\int_0^af_{X_n}(a-t)\mathrm{P}\left\{T_n\leq t|X_n=a-t\right\}\,\mathrm{d}t \\
%& = & 1 - \frac{e^{-\lambda a}\theta^{-B}}{\theta^{-B}-\theta^{K}} - \frac{\theta^{-1}e^{-ra}}{\theta^{-B}-\theta^{K}}\sum_{n=1}^{K}\frac{\left(\lambda a\right)^{n}}{n!} + \frac{\theta^{K}e^{-ra}}{\theta^{-B}-\theta^{K}}\sum_{n=0}^{K}\frac{\left(ra\right)^{n}}{n!}.
%\end{eqnarray*}

\section{Proof of Lemma 3}
\label{proof:l3}

Under the LCFS discipline, if a status packet arrives when the battery is not empty or afterwards a energy packet arrives before another status packet enters the data buffer, the status packet is a valid update. Thus, by the PASTA property, the probability of a status packet being a valid update is
% \begin{eqnarray*}
% \Prob{\mathrm{valid~update}} &=& \sum_{i=-B}^{-1}\mathrm{Prob}\left\{S_n=i\right\} + \frac{1}{1+\theta}\sum_{i=0}^{K}\mathrm{Prob}\left\{S_n=i\right\} \\
% &=& \frac{(\theta^{-B} - 1)\left(1+\theta\right)+(1-\theta^{K+1})}{\left(\theta^{-B}-\theta^{K+1}\right)\left(1+\theta\right)},
% \end{eqnarray*}
\begin{eqnarray*}
\sum_{i=-B}^{-1}\mathrm{Prob}\left\{S_n=i\right\} + \frac{1}{1+\theta}\sum_{i=0}^{K}\mathrm{Prob}\left\{S_n=i\right\} = \frac{(\theta^{-B} - 1)\left(1+\theta\right)+(1-\theta^{K+1})}{\left(\theta^{-B}-\theta^{K+1}\right)\left(1+\theta\right)},
\end{eqnarray*}
and the arrival rate of valid update is
$\tilde{\lambda} = \lambda\frac{(\theta^{-B} - 1)\left(1+\theta\right)+(1-\theta^{K+1})}{\left(\theta^{-B}-\theta^{K+1}\right)\left(1+\theta\right)}. $

\section{Proof of Lemma 4}
\label{proof:l4}
We first obtain the probability distribution of the state $S_n^-$ before valid updates' arrivals. By the definition of conditional probability and the PASTA property, we have
\begin{eqnarray}
\label{Eq:i}
\mathrm{Prob}\left\{S_n^-=i\right\}&=&\frac{\mathrm{Prob}\left\{S_n=i \mathrm{~upon~arrival}, \mathrm{valid~update}\right\}}{\Prob{\mathrm{valid~update}}}\notag\\
&=&\left\{
\begin{aligned}
&\frac{\left(1-\theta\right)\theta^{i}}{\left(\theta^{-B}-1\right)\left(1+\theta\right)+\left(1-\theta^{K+1}\right)},&\mathrm{~if~}0\leq i\leq K;\\
&\frac{\left(1-\theta^2\right)\theta^{i}}{\left(\theta^{-B}-1\right)\left(1+\theta\right)+\left(1-\theta^{K+1}\right)},&\mathrm{~if~}-B\leq i\leq-1.
\end{aligned}
\right.
\end{eqnarray}

Since the status packets arriving when the battery is not empty have zero sojourn time, we have
\begin{eqnarray}
\label{Eq:t1}
\mathrm{Prob}\left\{T_n \leq t\right\} = \sum_{i=-B}^{-1}\mathrm{Prob}\left\{S_n^-=i\right\} + \sum_{i=0}^{K}\mathrm{Prob}\left\{S_n^-=i\right\}\mathrm{Prob}\left\{T_n \leq t|S_n^-=i\right\},
\end{eqnarray}
where $\mathrm{Prob}\left\{T_n \leq t|S_n^-=i\geq0\right\}$ is the conditional probability distribution of sojourn time given that an energy packet comes before another status packet after the $n$-th valid update, which leads to
\begin{equation}
\label{Eq:ti1}
\mathrm{Prob}\left\{T_n \leq t|S_n^-=i\right\} = (1+\theta)\int_0^tre^{-r\tau}e^{-\lambda\tau}\mathrm{d}\tau = 1-e^{-\left(\lambda+r\right)t}, \mathrm{~if~}i \geq 0.
\end{equation}
Substituting Eq. (\ref{Eq:i}) and (\ref{Eq:ti1}) into Eq. (\ref{Eq:t1}), we have
\begin{eqnarray}
\label{Eq:tl}
\mathrm{Prob}\left[T_n \leq t\right]
&=&1-e^{-\left(\lambda+r\right)t}\frac{\left(1-\theta^{K+1}\right)}{\left(\theta^{-B}-1\right)\left(1+\theta\right)+\left(1-\theta^{K+1}\right)}
\end{eqnarray}

For the CDF of peak age $A$, we first obtain the conditional probability of the inter-delivery time $D_n$ between the delivery of the $n$-th and the $(n+1)$-th valid update given the sojourn time $T_n$ of the $n$-th valid update. After that, the CDF of peak age $A$ can be given by the total probability formula:
\begin{equation}
\label{Eq:al}
\Prob{A\leq a} = \int_0^af_{T_n}(a-d)\mathrm{P}\left\{D_n\leq d|T_n=a-d\right\}\,\mathrm{d}d
\end{equation}

Denote the system state right after the delivery of the $n$-th valid update as $S_n^+$. By the total probability formula, we have
\begin{eqnarray*}
\mathrm{Prob}\left\{D_n \leq d|T_n=t\right\} = \sum_{i=-B+1}^{K}\mathrm{Prob}\left\{D_n\leq d|T_n=t, S_n^+ = i\right\}\mathrm{Prob}\left\{ S_n^+ = i|T_n=t\right\}.
\end{eqnarray*}
Due to the fact that the inter-delivery time $D_n$ between the delivery of the $n$-th and the $(n+1)$-th valid update and the sojourn time $T_n$ of the $n$-th valid update are conditionally independent given the state $S_n^+$ right after the $n$-th valid update's delivery, the former equation can be simplified as
\begin{eqnarray}
\label{Eq:dt1}
\mathrm{Prob}\left\{D_n \leq d|T_n=t\right\} = \sum_{i=-B+1}^{K}\mathrm{Prob}\left\{D_n\leq d|S_n^+ = i\right\}\mathrm{Prob}\left\{ S_n^+ = i|T_n=t\right\}.
\end{eqnarray}

First, we obtain the first term on the right-hand side of Eq. (\ref{Eq:dt1}). If $S_n^+ < 0$, the inter-delivery time equals the inter-arrival time of two successive status packets, which gives
\begin{equation}
\label{Eq:di1}
\mathrm{Prob}\left\{D_n\leq d|S_n^+ = i\right\} = 1-e^{\lambda d}, \mathrm{~if~}i<0.
\end{equation}
If $S_n^+ \geq 0$, the complement of $D_n\leq d$ is that in $d$ after the $n$-th valid update's delivery, either there is no status packet arrival or there is no energy arrival, or that no more than $S_n^+$ energy packets arrive and all the energy packets arriving in $d$ after the $n$-th valid update's delivery come before status packets. Therefore, if $S_n^+ = 0$, we have
\begin{eqnarray}
\label{Eq:di2}
\mathrm{Prob}\left\{D\leq d|S_n^+ = 0\right\}&=&\mathrm{Prob}\left\{\mathrm{both~data~arrivals~and~energy~arrivals~in~}d\right\}\notag\\
&=&1-e^{-\lambda d}-e^{-rd}+e^{-r\left(1+\theta\right)d}.
\end{eqnarray}
If $S_n^+ > 0$, we have
\begin{eqnarray}
\label{Eq:di3}
&~&\mathrm{Prob}\left\{D_n\leq d|S_n^+ = i\right\}\notag\\
&=&\mathrm{Prob}\left\{\mathrm{both~data~arrivals~and~energy~arrivals~in~}d\right\} \notag\\
&~&- \sum_{k=1}^{i}\mathrm{Prob}\left\{k\mathrm{~energy~arrivals~in~}d, \mathrm{~all~energy~arrivals~before~the~first~data~arrival}\right\}\notag\\
%&=&\left[1-e^{-\lambda d}-e^{-rd}+e^{-r\left(1+\theta\right)d}\right] - \sum_{k=1}^i\int_0^de^{-\lambda u}\left[1-e^{-\lambda \left(d-u\right)}\right]e^{-r\left(d-u\right)}re^{-ru}\frac{\left(ru\right)^{k-1}}{\left(k-1\right)!}\mathrm{d}u. \notag\\
&=& 1-e^{-\lambda d}-\frac{\theta^{-i}-\theta}{1-\theta}e^{-rd}-e^{-r\left(1+\theta\right)d}\left[\frac{\theta}{1-\theta}\sum_{k=0}^{i}\frac{\left(rd\right)^k}{k!}-\frac{\theta^{-i}}{1-\theta}\sum_{k=0}^{i}\frac{\left(r\theta d\right)^k}{k!}\right].
\end{eqnarray}

Next, we obtain the second term on the right-hand side of Eq. (\ref{Eq:dt1}). If $T_n = 0$, we have that the state $S_n^-$ right before the $n$-th valid update's arrival is negative, and that $S_n^+ = S_n^- + 1$. Therefore, we get
\begin{eqnarray}
\label{Eq:it1}
\mathrm{Prob}\left\{ S_n^+ = i|T_n=0\right\} &=& \mathrm{Prob}\left\{ S_n^- = i-1|S_n^- < 0\right\}
=\left\{
\begin{aligned}
&0,&\mathrm{~if~} i\geq 1;\\
&\frac{\left(1-\theta\right)\theta^{i+B-1}}{1-\theta^{B}},&\mathrm{~if~} i\leq 0.
\end{aligned}
\right.
\end{eqnarray}
The sufficient and necessary condition of $T_n > 0$ is $S_n^- \geq 0$. If $T_n > 0$, the valid update first enters the data buffer, and is deliveried when an energy packet arrives. Therefore, $S_n^- = S_n^+$, and
\begin{eqnarray}
\label{Eq:it2}
\mathrm{Prob}\left\{ S_n^+ = i|T_n=t\right\} &=& \mathrm{Prob}\left\{ S_n^- = i|S_n^- \geq 0\right\}
=\left\{
\begin{aligned}
&0,&\mathrm{~if~} i<0;\\
&\frac{\left(1-\theta\right)\theta^{i}}{1-\theta^{K+1}},&\mathrm{~if~} i\geq 0.
\end{aligned}
\right.
\end{eqnarray}

Substituting Eq. (\ref{Eq:di1})--(\ref{Eq:it2}) into Eq. (\ref{Eq:dt1}), we get
\begin{eqnarray}
\label{Eq:dt}
&~&\mathrm{Prob}\left[D_n \leq d|T_n=t\right] \notag\\
& = & \left\{
\begin{aligned}
&\left(1-e^{-\lambda d}\right)\left[1-e^{-rd}\frac{\theta^{B-1}-\theta^B}{1-\theta^B}\right],&t=0;\\
&1-e^{-\lambda d} - \frac{K - \frac{\theta^2 - \theta^{K+2}}{1-\theta} }{1-\theta^{K+1}}e^{-rd}&~\\
&~~-\frac{e^{-r\left(1+\theta\right)d}}{1-\theta^{K+1}}\left[ - \sum_{k=0}^{K-1}(K-k)\frac{\left(r\theta d\right)^k}{k!} + \frac{\theta^2}{1-\theta}\sum_{k=0}^{K-1}\frac{\left(r\theta d\right)^k}{k!} - \frac{\theta^{K+2}}{1-\theta}\sum_{k=0}^{K-1}\frac{\left(rd\right)^{k}}{k!}\right] &t >0.
\end{aligned}
\right.
\end{eqnarray}

Combining Eq. (\ref{Eq:tl}) and (\ref{Eq:dt}), the lemma is proved.

%Combining Eq. (\ref{Eq:tl}) and (\ref{Eq:dt}), Eq. (\ref{Eq:al}) becomes
%\begin{eqnarray*}
%&~&\mathrm{Prob}\left[A\leq a\right] \notag\\
%&=& 1 - e^{-\lambda a}\frac{\left(\theta^{-B}-\theta^{K+1}\right)\left(1+\theta\right)}{\left(\theta^{-B}-1\right)\left(1+\theta\right)+\left(1-\theta^{K+1}\right)} - e^{-ra}(1+\theta)\frac{K\theta^{-1} + \frac{\theta^{K+1} - 2 + \theta^{-1}}{1-\theta}}{\left(\theta^{-B}-1\right)\left(1+\theta\right)+\left(1-\theta^{K+1}\right)} \notag\\
%&~& + \frac{e^{-\left(\lambda+r\right)a}(1+\theta)}{\left(\theta^{-B}-1\right)\left(1+\theta\right)+\left(1-\theta^{K+1}\right)}\left\{\frac{- 1 + \theta^{K+1}}{1+\theta} + (K\theta^{-1} + \theta^{-1} - \frac{\theta}{1-\theta})\sum_{k=0}^{K}\frac{\left(r\theta a\right)^{k}}{k!}\right.\notag\\
%&~&\left. - \theta^{-1}\sum_{k=0}^{K-1}\frac{\left(r\theta a\right)^{k+1}}{k!} + \frac{\theta^{K+2}}{1-\theta}\sum_{k=0}^{K}\frac{\left(ra\right)^{k}}{k!}\right\}
%\end{eqnarray*}

\end{appendices}

\bibliographystyle{ieeetr}

% **** THIS ENDS THE EXAMPLES. DON'T DELETE THE FOLLOWING LINE:
\end{document}